\newtheorem{theorem}{Theorem}[section]
\newtheorem{remark}{Remark}[section]
\providecommand{\ie}
{
  I\!E
}
\DeclareMathOperator{\cov}{cov}
\DeclareMathOperator{\var}{var}
\DeclareMathOperator{\cm}{cm}
\DeclareMathOperator*{\E}{\mathbb{E}}
\begin{document}
\bibliographystyle{apalike}

\title{Method of Moments Estimation for Affine Stochastic Volatility Models}
% \author{Yan-Feng Wu
%        \thanks{Postal address:
%        Department of Management Science, Fudan University,
%        670 Guoshun Road, Shanghai 200433, China.
%        Email address: wuyf@fudan.edu.cn}
%        \and
%         Xiangyu Yang
%        \thanks{Postal address:
%        School of Management, Shandong University,  27 Shanda Nanlu,
%        Jinan 250100, China.
%        Email address: yangxiangyu@email.sdu.edu.cn (Corresponding Author)}
%        \and
%        Jian-Qiang Hu
%        \thanks{Postal address:
%        Department of Management Science, Fudan University,
%        670 Guoshun Road, Shanghai 200433, China.
%        Email address: hujq@fudan.edu.cn}
% }
%\author{}
\author[a]{Yan-Feng Wu}
\author[b]{Xiangyu Yang \thanks{Corresponding author:  yangxiangyu@email.sdu.edu.cn }}
\author[a]{Jian-Qiang Hu}
\affil[a]{School of Management, Fudan University,       670 Guoshun Road, Shanghai 200433, China.}
\affil[b]{School of Management, Shandong University,  27 Shanda Nanlu,  Jinan 250100, China.}
\renewcommand*{\Affilfont}{\small\it}
\date{}
\maketitle

\begin{abstract}
\noindent
We develop moment estimators for the parameters of affine stochastic volatility models. We first address the challenge of calculating moments for the models by introducing a recursive equation for deriving closed-form expressions for moments of any order. Consequently, we propose our moment estimators. We then establish a central limit theorem for our estimators and derive the explicit formulas for the asymptotic covariance matrix. Finally, we provide numerical results to validate our method.
\end{abstract}
\emph{keywords}: affine jump diffusion; stochastic volatility; Heston model; method of moments.\\%
MOS subject classification: 62F12; 62M05; 60J25.

\section{Introduction}
Volatility plays a crucial role in the pricing of options and other derivative securities. Empirical evidence from financial
markets, such as volatility clustering, the dependence between increments, and volatility smiles, shows that the assumption of constant volatility used in the Black-Scholes European options pricing model is inappropriate (see \citealp{fama1965behavior}).  Therefore, several 
stochastic volatility (SV) models, such as \citet{heston1993closed}, \citet{bates1996jumps} and \citet{barndorff2001non}, have been proposed to better capture the time-varying features of volatility. However, parameter estimation for these SV models poses a formidable computational challenge for practical implementation. Likelihood-based inference and moment-based inference are two major types of estimation methods. Our work in this paper offers a new and direct method of moments (MM) estimation for affine SV models.

There is a large body of literature on likelihood-based inference, in which Markov chain Monte Carlo (MCMC) and maximum likelihood estimation (MLE) are the two main approaches used. An MCMC model assumes a prior distribution of parameters, and then iteratively collects samples in a Bayesian framework until certain convergence is achieved \citep{eraker2001mcmc,jacquier2002bayesian,eraker2003impact,roberts2004bayesian,griffin2006inference,jasra2011inference}. In some cases, this approach may become computationally prohibitive because of its slow convergence to a Markov chain equilibrium distribution. On the other hand, the MLE approach adopts a frequentist perspective and often requires a closed-form expression of the state transition density function; however, this is hard to obtain in SV models, let alone optimization. Hence, the MLE approach often involves some sort of approximation or simulation methods. Two prevail schemes are simulated MLE (SMLE) (see, e.g., \citet{durham2002numerical}) and quasi-MLE (QMLE) (see, e.g., \citet{ruiz1994quasi}, \citet{harvey1996estimation}, and \citet{Feunou2018risk}). \citet{hurn2014estimating} design an MLE estimate using a particle filter to approximate the likelihood. \citet{ai2007maximum} propose a closed-form approximation scheme based on the extracted volatility from observed option prices. \citet{andersen2015parametric} employ optimization tools such as  penalized nonlinear least squares to develop a new parametric estimation procedure. A gradient-based simulated MLE estimate is proposed in \citet{peng2014gradient} and \citet{peng2016gradient}. Nevertheless, approximations used in these MLE methods require enormous computational effort, and  many assumptions may be difficult to satisfy in practice.

Modern financial markets give rise to numerous real-time decision-making instances (e.g. high-frequency trading) that request computationally fast and less restrictive  estimators. The moment-based method is a case in point but has yet to be recognized fully. A few important contributions include the simulated MM (SMM); see \citet{duffie1993simulated}), efficient MM (EMM); see \citet{Bansal1994} and \citet{gallant1996moments}, and generalized MM (GMM); see  \cite{singleton2001estimation}, \citet{bollerslev2002estimating}, \citet{jiang2002estimation}, and \citet{chacko2003spectral}. SMM simulates sequences from the target models and  estimates parameters by matching simulated data moments with actual data moments numerically. GMM can be applied when we have more sample moments than the number of parameters. EMM combines the efficiency of MLE with the flexibility of GMM, which is a variant of SMM. In general, the major issue of the moment-based method is the so-called statistical  inefficiency in the sense that the higher order moments are used, the greater likelihood of estimation bias occurs.  Fortunately, we can mitigate this problem by using better computational techniques (e.g., see \citet{wu2019moment} and \citet{wu2022moment}). For instance, \citet{yang2020method} recently develop the first closed-form MM parameter estimators for a class of L\'{e}vy-driven Ornstein-Uhlenbeck stochastic volatility models. 

In this paper, we consider affine SV models, and develop explicit MM estimators for the parameters of these models. To this end, we first derive the analytical formulas for the required moments and covariances of the affine SV models and then use them to develop our MM estimators.
The MM estimators can be expressed explicitly in terms of the sample moments and sample covariances of the observed asset prices. 
In fact, we have established a recursive equation for deriving closed-form expressions for moments of any order, while our MM estimators only require those moments of relatively low order.
We also prove that the large-sample behaviors of our MM estimators satisfy the central limit theorem and derive the explicit formulas for the asymptotic covariance matrix. In addition, numerical experiments are provided to test the efficiency of our method.  The advantage of our estimators  is that they are simple and easy to implement in practice and do not require the availability of high-frequency data or option price data.

The rest of this paper is organized as follows. In Section 2, we present the affine SV model and its baseline SV model, then calculate the moments and covariances of the returns of the asset price. In Section 3, we derive our MM estimators and establish the central limit theorem. Some numerical experiments are provided in Section 4. Section 5 concludes this paper. The appendices contain some of the detailed calculations which are omitted in the main body of the paper and also some extensions of the baseline affine SV model.

\section{Affine Stochastic Volatility Model}

We first delineate the general framework of affine SV models. Subsequently, we employ the Heston model, a prototypical example of an affine SV model, to illustrate the derivation of closed-form expressions for moments and covariances of any order. We then extend this moment derivation methodology to a broader array of affine SV models, encompassing those with jump components in either the price or variance processes, as well as models incorporating multiple latent volatility factors. These extensions are detailed in the appendices.  

Consider $s(t)$ as the price of a certain asset at time $t$. The general affine SV model explored in this study is governed by the following stochastic differential equations (SDEs):
\begin{align}
\frac{ds(t)}{s(t)} &= \mu(\boldsymbol{v}(t)) dt + \sigma(\boldsymbol{v}(t)) dw^s_t + dz^s(t),\\
d\boldsymbol{v}(t) &= \boldsymbol{\mu_v}(\boldsymbol{v}(t)) dt + \boldsymbol{\sigma_v}(\boldsymbol{v}(t)) d\boldsymbol{w^v}(t) + d\boldsymbol{z^v}(t),
\end{align}
where $\boldsymbol{v}(t)$ denotes a vector of latent variance factors, $w^s(t)$ and $\boldsymbol{w^v}(t)$ are potentially correlated Wiener processes, $z^s(t)$ and $\boldsymbol{z^v}(t)$ are possibly correlated compound Poisson processes. The drift functions \textemdash $\mu(\cdot), \boldsymbol{\mu_v}(\cdot)$ and the variance $\sigma^2(\cdot)$ and the covariance matrix $\boldsymbol{\sigma_v}^T(\cdot)\boldsymbol{\sigma_v}(\cdot)$ \textemdash all have affine dependence on the hidden state vector $\boldsymbol{v}(t)$ \citep{duffie2000transform}. It is assumed that the drifts, diffusions and compound Poisson processes are well-behaved enough to ensure the unique strong solution of the SDEs. 

This general model includes several common SV models as special cases, such as the Heston model, the stochastic volatility with jumps (SVJ) model, the stochastic volatility with independent jumps (SVIJ) model, the stochastic volatility with contemporaneous jumps (SVCJ) model, and even the BNS model. Within the framework of the Heston model \citep{heston1993closed}, the vector $\boldsymbol{v}(t)$ simplifies to a single variance factor $v(t)$, with drift and diffusion functions defined as:
\begin{align*}
\mu(v(t)) &= \mu, & \sigma(v(t)) &= \sqrt{v(t)},\\
\mu_v(v(t)) &= k(\theta - v(t)), & \sigma_v(v(t)) &= \sigma_v \sqrt{v(t)},
\end{align*}
and without any jump components $z^s(t)$ and $z^v(t)$ in both the price and variance equations. The SVJ model, as described by \citep{bates1996jumps}, extends the Heston framework by including jumps in the price process. The SVIJ and SVCJ models, detailed in \citet{eraker2003impact}, further incorporate jumps into the variance process, either independently or contemporaneously with price jumps. An additional variant is the multi-factor Heston model, which characterizes the latent variance as a vector of square-root diffusions, i.e., $\boldsymbol{v}(t) = (v_1(t),v_2(t))^T$, and 
\begin{align*}
\mu(\boldsymbol{v}(t)) &= \mu, & \sigma(\boldsymbol{v}(t)) &= \sqrt{v_1(t)+v_2(t)},\\
v_1(t)&= k_1(\theta_1 - v_1(t)), & \sigma_{v1}(v_1(t)) &= \sigma_{v1} \sqrt{v_1(t)},\\
v_2(t)&= k_2(\theta_2 - v_2(t)), & \sigma_{v2}(v_2(t)) &= \sigma_{v2} \sqrt{v_2(t)}.
\end{align*}

In contrast, the BNS model, proposed by \citet{barndorff2001non} posits a one-dimensional non-Gaussian Ornstein-Uhlenbeck (OU) process for the latent variance, driven solely by a compound Poisson process $z^v(t)$, devoid of a diffusion component. The drift and diffusion functions are defined as: 
\begin{align*}
\mu(v(t)) &= \mu, & \sigma(v(t)) &= \sqrt{v(t)},\\
\mu_v(v(t)) &= -\lambda v(t), & \sigma_v(v(t)) &= 0, 
\end{align*} 
and there is no jump components in the price process.

It is notable that these affine SV models do not possess closed-form solution for their transition density functions. However, for the general affine SV model, closed-form expressions for their conditional Characteristic Functions (CFs) can be derived \citep{duffie2000transform, singleton2001estimation, chacko2003spectral}. \citet{jiang2002estimation} extended this work by developing a closed-form CF for a simplified version of the Heston model.

Our work in this paper aims to extend the analytical tractability of these models to include closed-form formulas for moments and covariances of any (positive integer) order, which can then be used to construct moment estimators for the affine SV model. 

Since the Heston model is the baseline affine SV model, we will focus on its moment calculations which will lay the foundation for the more complex moment computations of other affine SV models. The methodology developed for the Heston model can be generalized and applied to estimate other affine SV models. The detailed moment-derivation procedures for these other affine SV models are provided in the appendices. 

\subsection{Baseline Affine Stochastic Volatility Model}
 We use the following Heston model \citep{heston1993closed} as our baseline affine SV model :
\begin{align}
    \frac{ds(t)}{s(t)} &= \mu dt + \sqrt{v(t)}dw^s(t),\label{eqn:price-process}\\
    dv(t) &= k(\theta - v(t))dt + \sigma_v\sqrt{v(t)}dw^v(t),%\label{eqn:volatility-process}
    \label{eq.v}
\end{align}
where $v(t)$ is a scalar affine diffusion, and $w^s(t)$ and $w^v(t)$ are two Wiener processes with correlation $\rho$. 
Assume that the initial values $s(0)$ and $v(0)$ are independent of each other and also independent of $w^s(t)$ and $w^v(t)$.  The variance process \eqref{eq.v} is a Cox-Ingersoll-Ross process \citep{cox1985theory}, which is also called square-root diffusion. The parameters $\theta>0$, $k>0$, and $\sigma_v$ determine the long-run mean, the mean reversion velocity, and the volatility of the variance process $v(t)$, respectively, and satisfy the condition $\sigma_v^2 \le 2k\theta$.

Based on It\^{o} formula, the log price process can be written as:
\begin{equation}\label{eqn:log-price}
   d\log s(t) = (\mu-\frac{1}{2}v(t))dt + \sqrt{v(t)}dw^s(t).
\end{equation}
Though modeled as a continuous-time process, the asset price is observed at discrete-time instances. Assume we have observations of $s(t)$ at discrete-time $ih$ ($i=0,1,\cdots,N$), and let $s_i \triangleq s(ih)$. Similarly, let $v_i \triangleq v(ih)$, however, we should note that $v_i$ is not observable.  Finally, we define the return $y_i$ as 
$$
y_i \triangleq \log s_i - \log s_{i-1} .  
$$

We can decompose $w^s(t)$ as
\begin{equation*}
    w^s(t) = \rho w^v(t) + \sqrt{1-\rho^2}w(t),
\end{equation*}
where $w(t)$ is another Wiener process  independent of $w^v(t)$.  For notational simplicity, we define:
\begin{align*}
    I_{s,t} \triangleq & \int_{s}^{t}\sqrt{v(u)}dw^v(u), 
    & I_{s,t}^* \triangleq & \int_{s}^{t}\sqrt{v(u)}dw(u),\\
    \ie_{s,t} \triangleq & \int_{s}^{t}e^{ku}\sqrt{v(u)}dw^v(u),
    & IV_{s,t} \triangleq & \int_{s}^{t}v(u)du,
\end{align*}
and
\begin{align*}
    I_i&\triangleq I_{(i-1)h,ih}, 
    &I_i^*&\triangleq I^*_{(i-1)h,ih},
    &\ie_i&\triangleq \ie_{(i-1)h,ih},
    &IV_i&\triangleq IV_{(i-1)h,ih},\\
    I_{i,t}&\triangleq I_{ih,t},
    &I^*_{i,t}&\triangleq I^*_{ih,t},
    &\ie_{i,t}&\triangleq \ie_{ih,t},
    &IV_{i,t}&\triangleq IV_{ih,t}.
\end{align*}
$IV_{s,t}$, $IV_i$, and $IV_{i,t}$ defined above are also referred as integrated variance (volatility).  We can then express $y_i$ as
\begin{equation}
    y_i = \mu h - \frac{1}{2}IV_i + \rho I_i + \sqrt{1-\rho^2} I_i^*.
\label{eq_yn}
\end{equation}

%\subsection{Volatility and integrated volatility}
Based on (\ref{eq.v}), the variance process $v(t)$ can be re-written as:
\begin{equation}\label{eqn:v(t)}
    v(t) = e^{-k(t-s)}v(s)  +\theta \left[1-e^{-k(t-s)}\right] + \sigma_v e^{-kt}\ie_{s,t},
\end{equation}
which is a Markov  process that has a steady-state gamma distribution with mean $\theta$ and variance $\theta \sigma_v^2/(2k)$, e.g., see \cite{cox1985theory}.  Without loss of generality, throughout this paper we assume that $v(0)$ is distributed according to the steady-state distribution of $v(t)$, which implies that $v(t)$ is strictly stationary and ergodic \citep{overbeck1997}.
However, all the results we shall derive in the rest of this paper, including the formulas for all the moments/covariances and the MM estimators, remain the same for any non-negative $v(0)$.   %Actually, \cite{overbeck1997} also point out the stationarity and ergodicity hold for arbitrary initial distributions. Actually, no matter what nonnegative value the initial volatility $v(0)$ takes the volatility process $v(t)$ will reach a steady state, i.e., having a steady distribution, in the long run. The moment results hold for the long run steady distribution. Hereafter, we will proceed under Assumption 2.1 for simplicity. 
Since $v(t)$ is stationary, we have
\begin{eqnarray}
    \E[v(t)^m] &=& \prod_{j=0}^{m-1}\left(\theta + \frac{j\sigma_v^2}{2k}\right), \label{eqn:v_moment} \\
    \var (v(t)) &=& \frac{\theta\sigma_v^2}{2k}, 
\end{eqnarray}
for $m=1,2,\ldots$ and $t>0$, and
\begin{eqnarray}
    \E[v(t)v(s)] &=& e^{-k(t-s)}\frac{\theta\sigma_v^2}{2k} + \theta^2, \\
    \cov (v(t),v(s)) &=& e^{-k(t-s)},
\end{eqnarray}
for $s<t$.
%\begin{equation*}
%    var(v(t)) = \frac{\theta\sigma_v^2}{2k}, \quad corr(v(t),v(s)) = e^{-k(t-s)}, \forall s < t. 
%\end{equation*}
Furthermore, we have
\begin{eqnarray}
IV_{s,t} & = & \theta (t-s) - \frac{v(t)-v(s)}{k} + \frac{\sigma_v}{k}I_{s,t}, \\
IV_i & = & 
%\theta h - (v_n-v_{n-1})/k + (\sigma_v/k)I_n,
\theta (h-\tilde{h}) + \tilde{h} v_{i-1} - \frac{\sigma_v}{k}e^{-kih}\ie_i + \frac{\sigma_v}{k}I_i,
\label{eqn:IV_n}
\end{eqnarray}
where $\tilde{h} \triangleq (1-e^{-kh})/k$. The above formulas will be useful in calculating the moments and covariances of $y_i$ in the next section. 
Hereafter, we will use a generic subscript $n$ to denote variance and return (i.e., $v_n, y_n$) in their steady states.
%Finally, $I_{s,t}$, $\ie_{s,t}$ are martingales, $cov(v_{n-1}, I_{n}) = 0$, $cov(v_{n-1}, \ie_{n}) = 0$. 

\subsection{Moments and covariances of returns}

There are five parameters in the Heston model, therefore, to estimate them we need to obtain at least five moments/covariances of $y_n$.  In this section, we show specifically how $\E[y_n]$, $\var(y_n)$, $\cov(y_n,y_{n+1})$, $\cov(y_n,y_{n+2})$, and $\cov(y_n^2,y_{n+1})$ can be calculated and derive explicit formulas for them in terms of the five parameters involved in the Heston model, which can then be used to estimate these parameters.  

Based on (\ref{eq_yn}), together with $\E[I_n] = 0$, $\E[I^*_n] = 0$, and $\E[IV_n] = \theta h$, we have
\begin{equation}
    \E[y_n] = (\mu-\theta/2)h.
\end{equation}
To calculate $\var(y_n)$, we first note $\cov(IV_n,I_n^*) = 0$ %due to \eqref{eqn:IV_n}
, $\cov(v_{n-1},I_n^*) = 0$, $\cov(\ie_n,I_n^*) = 0$, and $\cov(I_n,I_n^*) = 0$.  Hence,
\begin{equation*}
\var(y_n) = \frac{1}{4}\var(IV_n) - \rho \cov(IV_n,I_n) + \rho^2 \var(I_n) + (1-\rho^2) \var(I_n^*).
\end{equation*}
%\begin{equation*}
%   cov(v_{n-1},I_n^*) = 0, \quad cov(\ie_n,I_n^*) = 0,
%   \quad cov(I_n,I_n^*) = 0. 
%\end{equation*}
Let us take $\var(IV_n)$ as an example, it can be further expanded as
\begin{align*}
    \var(IV_n)
    &= \tilde{h}^2\var(v_{n-1}) + \frac{\sigma_v^2}{k^2}\left( e^{-2knh}\var(\ie_n)  + \var(I_n) - 2e^{-knh}\cov(\ie_n,I_n)\right),
\end{align*}
and we have
\begin{equation*}
    \var (IV_n)
    = \frac{\theta \sigma_v^2}{k^2}(h-\tilde{h}).
\end{equation*}
Terms
$\cov(IV_n,I_n),\var(I_n)$ and $\var(I_n^*)$ can be computed similarly. In summary, we have
\begin{equation}
    \var(y_n) = \theta h + \left(\frac{\sigma_v^2}{4k^2}-\frac{\rho \sigma_v}{k} \right)\theta (h-\tilde{h}).
\label{eq.varyn}
\end{equation}

For $\cov(y_n,y_{n+1})$, $\cov(y_n,y_{n+2})$, we have
\begin{eqnarray}
\cov(y_n,y_{n+1}) & = & \theta\tilde{h}^2\left(\frac{\sigma_v^2}{8k} - \frac{\rho\sigma_v}{2}\right), \label{eq_ylag1} \\
\cov(y_n,y_{n+2}) & = & e^{-kh}\cov(y_n,y_{n+1}) =
e^{-kh} \theta\tilde{h}^2\left(\frac{\sigma_v^2}{8k} - \frac{\rho\sigma_v}{2}\right). \label{eq_ylag2}
\end{eqnarray} 
In fact, (\ref{eq_ylag2}) can be generalized as
\begin{equation}\label{eqn:cov-y-with-m-lagged-y}
    \cov(y_n,y_{n+m}) = e^{-(m-1)kh}\cov(y_n,y_{n+1}), 
\end{equation}
for $m = 2, 3,\cdots$.

Finally, we consider $\cov(y_n^2,y_{n+1})$.  Since the computation is fairly straightforward but tedious, we omit all the details and just provide the following final result:
\begin{align}
\cov(y_n^2,y_{n+1}) 
    &= \frac{\theta \sigma_v^4}{8k^3}\tilde{h}(he^{-kh} - \tilde{h}) + \left( \frac{\theta \sigma_v^2}{4k}\mu h - \frac{\theta^2\sigma_v^2}{8k}h - \frac{\theta\sigma_v^2}{4k} \right)\tilde{h}^2\nonumber\\
    &\quad - \frac{\rho\sigma_v}{2}\tilde{h}\bigg[  \left( \frac{3\sigma_v^2}{2k^2} - \frac{2\rho\sigma_v}{k}  \right)\theta(he^{-kh} -\tilde{h}) + (2\mu\theta -\theta^2)h\tilde{h} \bigg].
\end{align}
All the detailed calculations are provided in Appendix A.

Before closing, we point out that we can in fact compute the moments and covariances of \textit{any} (positive integer) order and a  recursive calculation procedure is provided in Appendix B.
And the calculation of the moments and covariances for other affine SV models is presented in Appendix D.

\section{Parameter estimation}
In this section, we derive our MM estimators for the five parameters in the Heston model based on asset price samples and the moments/covariances obtained in the previous section.  We prove both the moments/covariances estimators and the MM estimators satisfy the central limit theorem and also derive the explicit formulas for their covariance matrices.   

Assume we are given a sequence of asset price samples, $S_0,S_1,\cdots, S_N$, based on which we can calculate return samples:
\begin{equation*}
    Y_i \triangleq \log S_i - \log S_{i-1}, 
\end{equation*}
$i = 1,\cdots, N$.
These samples can be used to compute the sample moments and covariances of $y_n$, which can be used to estimate their population counterparts:
\begin{align*}
    \E[y_n] &\approx \overline{Y} \triangleq \frac{1}{N}\sum_{i=1}^N Y_i , \\
    \var(y_n)  &\approx S^2 \triangleq \frac{1}{N}\sum_{i=1}^N (Y_i - \overline{Y})^2 ,\\
    \cov(y_n,y_{n+1}) &\approx \hat{\cov}(y_n,y_{n+1}) \triangleq  \frac{1}{N-1}\sum_{i=1}^{N-1} (Y_i - \overline{Y})(Y_{i+1} - \overline{Y}) ,\\
    \cov(y_n,y_{n+2}) &\approx \hat{\cov}(y_n,y_{n+2}) \triangleq \frac{1}{N-2}\sum_{i=1}^{N-2} (Y_i - \overline{Y})(Y_{i+2} - \overline{Y}), \\
    \cov(y_n^2,y_{n+1}) &\approx \hat{\cov}(y_n^2,y_{n+1}) \triangleq \frac{1}{N-1}\sum_{i=1}^{N-1} (Y_i^2 - \overline{Y^2})(Y_{i+1} - \overline{Y}) ,
\end{align*}
where $\overline{Y^2} \triangleq (1/N)\sum_{i=1}^N Y_i^2$. 
Let 
\begin{align*}
    \boldsymbol{\gamma} &= (\E[y_n],\var(y_n),\cov(y_n,y_{n+1}),\cov(y_n,y_{n+2}), \cov(y_n^2,y_{n+1}))^T, \\
    \hat{\boldsymbol{\gamma}} &= (\overline{Y}, S^2, \hat{\cov}(y_n,y_{n+1}), \hat{\cov}(y_n,y_{n+2}), \hat{\cov}(y_n^2,y_{n+1}))^T.
%    \tilde{\boldsymbol{\gamma}} &= (\overline{Y}, \tilde{S}^2, \tilde{\cov}(y_n,y_{n+1}), \tilde{\cov}(y_n,y_{n+2}), \tilde{\cov}(y_n^2,y_{n+1}))^T,
\end{align*}
where $T$ denotes the transpose of a vector or matrix.  We also define
\begin{align*}
        z_i^1 &\triangleq (y_i-\E[y_n])^2, 
        &z_i^2 &\triangleq (y_i-\E[y_n])(y_{i+1}-\E[y_n]), \\
        z_i^3 &\triangleq (y_i-\E[y_n])(y_{i+2}-\E[y_n]),
        & z_i^4 &\triangleq (y_i^2-\E[y_n^2])(y_{i+1}-\E[y_n]), \\
        z_i &\triangleq (z_i^1,z_i^2,z_i^3,z_i^4). 
        &  &  
\end{align*}
Let  
\begin{align*}
%Z_i &\triangleq (y_i,z_i^1,z_i^2,z_i^3,z_i^4)^T, \\
        \sigma_{11} &= \lim_{N\rightarrow\infty}\frac{1}{N}\boldsymbol{1}^T_N[\cov(y_i,y_j)]_{N\times N} \boldsymbol{1}_N,\\
        \sigma_{1m} &= \lim_{N\rightarrow\infty}\frac{1}{N}\boldsymbol{1}^T_N[\cov(y_i,z_j^{m-1})]_{N\times N_m} \boldsymbol{1}_{N_m}, ~~~~~~~\mbox{$2\leq m\leq 5$}, \\
        \sigma_{lm} &= \lim_{N\rightarrow\infty}\frac{1}{N}\boldsymbol{1}^T_{N_k}[\cov(z_i^{l-1},z_j^{m-1})]_{N_l\times N_m} \boldsymbol{1}_{N_c}, ~~~2\leq l\leq m\leq 5,
\end{align*}
    where $(N_2,N_3,N_4,N_5) = (N,N-1,N-2,N-1)$ and $\boldsymbol{1}_{n} = (1,\cdots,1)^T$ with $n$ elements.  
We are now ready to present the following central limit theorem for our moment/covariance estimators.
\begin{theorem}
\label{thm4.1}
    \begin{equation}\label{eqn:CLT}
        \lim_{N\rightarrow\infty}\sqrt{N}(\hat{\boldsymbol{\gamma}} - \boldsymbol{\gamma}) \overset{d}{=} \mathcal{N}(0,\Sigma)
    \end{equation}
    where $\overset{d}{=}$ denotes equal in distribution and $\mathcal{N}(0,\Sigma)$ is a multivariate normal distribution with mean 0 and symmetric covariance matrix
    $\Sigma = [\sigma_{lm}]_{5\times 5}$, with
    diagonal entries
    \begin{align}
        \sigma_{11}
        & = \theta h + \theta h \left(\frac{\sigma_v^2}{4k^2} - \frac{\rho\sigma_v}{k} \right),\label{eqn:sigma11}\\
        \sigma_{22}
        &= \var(z_n^1) 
        + 2C_2\cov(z_n^1,v_n)
        + 2C \cov(z_n^1,v_n^2),\label{eqn:sigma22}\\
        \sigma_{33}
        &= \var(z_n^2) + 2 \cov(z_n^2,z_{n+1}^2) + 2C_3\cov(z_n^2,v_{n+1})\nonumber\\
        &\quad + 2e^{-kh}C \cov(z_n^2, v_{n+1}^2),\label{eqn:sigma33}\\
        \sigma_{44}
        &= \var(z_n^3) + 2\cov(z_n^3,z_{n+1}^3) + 2 \cov(z_n^3,z_{n+2}^3) + 2C_4 \cov(z_n^3,v_{n+2})\nonumber\\
        &\quad  + 2e^{-2kh}C \cov(z_n^3,v_{n+2}^2),\label{eqn:sigma44}\\
        \sigma_{55}
        &= \var(z_n^4) + 2\cov(z_n^4,z_{n+1}^4) + 2C_{51}\cov(z_n^4,v_{n+1}) + 2C_{52}\cov(z_n^4,v_{n+1}^2)\nonumber\\
        &\quad + 2C_{53}\cov(z_n^4,v_{n+1}^3),\label{eqn:sigma55}
    \end{align}
    and off-diagonal entries
\begingroup
\allowdisplaybreaks
    \begin{align*}
        \sigma_{12}
        &= \cov(y_n,z_n^1) + C_y \cov(v_n,z_n^1) + C_2 \cov(y_n,v_n) + C \cov(y_n, v_n^2),\\%\label{eqn:sigma12}\\
        \sigma_{13}
        &= \cov(y_n,z_n^2) + \cov(y_{n+1},z_n^2) + C_y \cov(v_{n+1},z_n^2) + C_3 \cov(y_n,v_n)\nonumber\\
        &\quad  + e^{-kh}C \cov(y_n,v_n^2),\\%\label{eqn:sigma13}\\
        \sigma_{14}
        &= \cov(y_n,z_n^3) + \cov(y_{n+1},z_n^3) + \cov(y_{n+2},z_n^3)+ C_y \cov(v_{n+2},z_n^3) \nonumber\\
        &\quad + C_4 \cov(y_n,v_n) + e^{-2kh}C \cov(y_n, v_n^2),\\%\label{eqn:sigma14}\\
        \sigma_{15}
        &= \cov(y_n,z_n^4) + \cov(y_{n+1},z_n^4) + C_y \cov(v_{n+1},z_n^4)\nonumber\\
        &\quad + C_{51}\cov(y_n,v_{n}) + C_{52}\cov(y_n,v_{n}^2) + C_{53}\cov(y_n,v_{n}^3),%\label{eqn:sigma15} 
        \\
%    \end{align*}
%    and
%    \begin{align*}
        \sigma_{23}
        &= \cov(z_n^1,z_n^2) + C_3 \cov(z_n^1, v_n) +  e^{-kh}C \cov(z_n^1, v_n^2) \nonumber \\
        &\quad + \cov(z_{n+1}^1,z_n^2) + C_2 \cov(v_{n+1}, z_n^2) + C \cov(v_{n+1}^2,z_n^2),\\%\label{eqn:sigma23}\\
        \sigma_{24}
        &= \cov(z_n^1,z_n^3) + C_4 \cov(z_n^1, v_n) + e^{-2kh}C \cov(z_n^1, v_n^2) + \cov(z_{n+1}^1,z_n^3)\nonumber\\ 
        &\quad  + \cov(z_{n+2}^1,z_n^3) + C_2 \cov(v_{n+2}, z_n^3) + C \cov(v_{n+2}^2, z_n^3),\\%\label{eqn:sigma24}\\
        \sigma_{25}
        &=\cov(z_n^1,z_n^4) + C_{51}\cov(z_n^1,v_n) + C_{52}\cov(z_n^1,v_n^2) + C_{53}\cov(z_n^1,v_n^3)\nonumber\\
        &\quad + \cov(z_{n+1}^1,z_n^4) + C_2 \cov(v_{n+1},z_n^4) + C_3 \cov(v_{n+1}^2,z_n^4),%\label{eqn:sigma25}
    \\
%    \end{align*}
%    and
%    \begin{align*}
        \sigma_{34}
        &= \cov(z_n^2,z_n^3) + \cov(z_n^2,z_{n+1}^3) + C_4 \cov(z_n^2,v_{n+1}) \nonumber\\
        &\quad  + e^{-2kh}C \cov(z_n^2,v_{n+1}^2) +  \cov(z_{n+1}^2,z_n^3) + \cov(z_{n+2}^2,z_n^3) \nonumber\\
        &\quad + C_3 \cov(v_{n+2},z_n^3) + e^{-kh}C \cov(v_{n+2}^2, z_n^3),\\%\label{eqn:sigma34}\\
        \sigma_{35}
        &= \cov(z_n^2,z_n^4) + \cov(z_n^2,z_{n+1}^4)\nonumber\\
        &\quad + C_{51}\cov(z_n^2,v_{n+1}) +  C_{52}\cov(z_n^2,v_{n+1}^2) +  C_{53}\cov(z_n^2,v_{n+1}^3)\nonumber\\
        &\quad + \cov(z_{n+1}^2,z_n^4) + C_3\cov(v_{n+1},z_n^4) + e^{-kh}C \cov(v_{n+1}^2,z_n^4),\\%\label{eqn:sigma35}\\
        \sigma_{45}
        &= \cov(z_n^3,z_n^4) + \cov(z_n^3,z_{n+1}^4) + \cov(z_n^3,z_{n+2}^4)\nonumber\\
        &\quad + C_{51}\cov(z_n^2,v_{n+2}) +  C_{52}\cov(z_n^2,v_{n+2}^2) +  C_{53}\cov(z_n^2,v_{n+2}^3)\nonumber\\
        &\quad + \cov(z_{n+1}^3,z_n^4) +  C_4 \cov(v_{n+1},z_n^4) + e^{-2kh}C \cov(v_{n+1}^2, z_n^4), %\label{eqn:sigma45}
    \end{align*}
\endgroup
    where $C_2,C_3,C_4,C,C_y,C_{51},C_{52},C_{53}$ are provided in Appendix C.
\end{theorem}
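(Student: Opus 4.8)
The plan is to recognize $\sqrt{N}(\hat{\boldsymbol\gamma}-\boldsymbol\gamma)$ as an asymptotically negligible perturbation of a scaled partial sum of a stationary, mixing sequence, apply a central limit theorem for such sequences, and then identify and compute the entries of $\Sigma$ as long‑run covariances.

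\emph{Step 1 (reduction to a sum of a stationary sequence).}
Introduce the $\mathbb{R}^5$‑valued process $W_i\triangleq(Y_i,z_i^1,z_i^2,z_i^3,z_i^4)^T$. Because $v(0)$ is drawn from the steady‑state gamma law, $v(\cdot)$ is strictly stationary and ergodic, and each $Y_i$ — hence each $W_i$ — is a fixed measurable functional of the restriction of $(v,w^v,w)$ to the bounded window $[(i-1)h,(i+2)h]$; therefore $(W_i)$ is strictly stationary with $\E[W_i]=\boldsymbol\gamma$, since $\E[z_i^1]=\var(y_n)$, $\E[z_i^2]=\cov(y_n,y_{n+1})$, $\E[z_i^3]=\cov(y_n,y_{n+2})$, $\E[z_i^4]=\cov(y_n^2,y_{n+1})$ and $\E[Y_i]=\E[y_n]$. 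I would then relate $\hat{\boldsymbol\gamma}$ to the plain average $\frac1N\sum_i W_i$: expanding the demeaned sums gives $S^2=\frac1N\sum_i z_i^1-(\overline Y-\E[y_n])^2$, and likewise each of $\hat\cov(y_n,y_{n+1})$, $\hat\cov(y_n,y_{n+2})$, $\hat\cov(y_n^2,y_{n+1})$ equals $\frac1{N_m}\sum_i z_i^{\,m-1}$ plus products of the centred sample averages $\overline Y-\E[y_n]$ and $\overline{Y^2}-\E[y_n^2]$. Since each such centred average is $O_p(N^{-1/2})$ (ergodic theorem plus the summability of the autocovariances established below), these corrections — together with the discrepancies coming from the three summation ranges $N_m\in\{N,N-1,N-2\}$ and the finitely many boundary terms they omit — are $O_p(N^{-1})$. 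By Slutsky's theorem, $\sqrt{N}(\hat{\boldsymbol\gamma}-\boldsymbol\gamma)$ and $\sqrt{N}\big(\frac1N\sum_i W_i-\boldsymbol\gamma\big)$ then have the same limiting law.

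\emph{Step 2 (CLT for the stationary sequence).}
The square‑root diffusion $v$ is geometrically ergodic, hence exponentially $\beta$‑mixing; since $W_i$ depends on the past only through a moving window of width $3h$ plus independent Brownian increments, the sequence $(W_i)$ inherits a strong‑mixing coefficient decaying exponentially. Moreover $v$ has finite moments of all orders, and by the Burkholder--Davis--Gundy inequality so do $I_i$, $I_i^*$, $\ie_i$, $IV_i$, hence so does $y_i$ through \eqref{eq_yn}; in particular every component of $W_i$ has a finite moment of order $2+\delta$ for some $\delta>0$. A central limit theorem for stationary strongly mixing sequences (e.g.\ Ibragimov's theorem, applied coordinatewise and assembled via the Cram\'er--Wold device) then yields $\sqrt{N}\big(\frac1N\sum_{i=1}^N W_i-\boldsymbol\gamma\big)\overset{d}{\to}\mathcal{N}(0,\Sigma)$, with $\Sigma=\sum_{j=-\infty}^{\infty}\cov(W_0,W_j)$ and the series converging absolutely. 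This long‑run covariance matrix is exactly $[\sigma_{lm}]_{5\times5}$ as written in the statement through the arrays $\frac1N\boldsymbol1^T[\cov(\cdot,\cdot)]\boldsymbol1$, the difference among the normalizations $1/N$, $1/(N-1)$, $1/(N-2)$ being immaterial in the limit. Combined with Step 1, this proves \eqref{eqn:CLT}.

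\emph{Step 3 (closed forms — the main obstacle).}
It remains to evaluate the fifteen entries of $\Sigma=\sum_j\cov(W_0,W_j)$ in closed form. For $\sigma_{11}$ this is quick: $\sum_j\cov(y_0,y_j)=\var(y_n)+2\sum_{m\ge1}\cov(y_n,y_{n+m})$, and \eqref{eqn:cov-y-with-m-lagged-y} turns the tail into a geometric series with ratio $e^{-kh}$, which together with \eqref{eq.varyn}--\eqref{eq_ylag1} collapses to $\sigma_{11}=\theta h+\theta h\big(\sigma_v^2/4k^2-\rho\sigma_v/k\big)$. For the remaining entries the mechanism is the same but the bookkeeping is heavier. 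For a given pair $(l,m)$ I would split $\sum_j\cov(W_0^l,W_j^m)$ into (i) the finitely many lags $j$ for which the time windows of $W_0^l$ and $W_j^m$ overlap or abut — these produce the ``direct'' contributions such as $\cov(z_n^1,z_n^2)$, $\cov(z_{n+1}^1,z_n^2)$, $\var(z_n^2)$, $\cov(z_n^2,z_{n+1}^2)$, and so on, each evaluated by the same conditioning‑on‑$v_{n-1}$ computation used in Section 2.2 and Appendix A via \eqref{eq_yn} and \eqref{eqn:IV_n} — and (ii) the two one‑sided tails. In a tail, conditioning on the value of $v$ at the left endpoint of the later of the two windows and using that $\E[z_m^{\,\cdot}\mid v]$ is a polynomial in $v$ of degree at most three, together with the Markov identity $\E[v_t^{\,p}\mid v_s]=e^{-pk(t-s)}v_s^{\,p}+(\text{lower-degree terms})$ for $s<t$ (a consequence of the affine structure of $v$, cf.\ \eqref{eqn:v(t)} and \eqref{eqn:v_moment}), reduces each tail to a combination of geometric series with ratios $e^{-kh}$, $e^{-2kh}$, $e^{-3kh}$; summing them produces the finite constants $C,C_2,C_3,C_4,C_y,C_{51},C_{52},C_{53}$ collected in Appendix C, and assembling the overlap and tail pieces yields the displayed formulas for $\sigma_{22},\dots,\sigma_{55}$ and for the off‑diagonal $\sigma_{lm}$. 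The real difficulty lies exactly here: for each of the fifteen pairs one must correctly identify which lags contribute overlap terms versus tail terms, and then carry through, without error, the attendant moment computations for the square‑root diffusion. The probabilistic ingredients — the mixing property, the CLT, and Slutsky's theorem — are routine by comparison.
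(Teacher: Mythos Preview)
Your proposal is correct and follows essentially the same route as the paper: establish that the return sequence (and the derived $W_i$) is strictly stationary and exponentially strong mixing, invoke Ibragimov's CLT for such sequences, and then evaluate the long-run covariance entries by splitting off finitely many overlap lags and summing the remaining geometric tails in $e^{-kh}$, $e^{-2kh}$, $e^{-3kh}$. The only cosmetic differences are that the paper appeals to the hidden-Markov framework and the specific mixing results of Genon-Catalot et al.\ (2000) rather than arguing the mixing inheritance directly, and it absorbs your Slutsky reduction into a reference to their Corollary~3.1; conversely, you are more explicit about the $O_p(N^{-1})$ replacement of sample means by population means and about moment control via BDG, which the paper leaves implicit.
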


\begin{proof}
The variance process $v(t)$ is a Markov process, so is the discrete-time process $\{v_n\}$. Conditioning on $v_{n-1}$, $IV_n$ and $I_n$ are independent of $\{v_i, 0\le i\le n-2\}$. 
Hence, $\{(v_n,IV_n,I_n)\}$ is also a Markov process. 
The return process $\{ y_n\}
$ can be viewed as a stochastic process driven by the hidden Markov chain $\{(v_n,IV_n,I_n)\}$. Conditioning on $(v_n, IV_n, I_n)$, $y_n$'s are independent and normally distributed with mean $\mu h - IV_n/2 + \rho I_n$ and variance $(1-\rho^2)IV_n$, i.e.,
\begin{equation*}
    y_n|(v_n, IV_n, I_n) \sim \mathcal{N}\left(\mu h - \frac{1}{2}IV_n + \rho I_n, (1-\rho^2)IV_n\right).
\end{equation*}
Under the condition $\sigma_v^2 \le 2k\theta$, the strictly stationary variance process $v(t)$ (i.e., Cox-Ingersoll-Ross process \citep{cox1985theory}) is  $\rho$-mixing; see Proposition 2.8 in \citet{genon2000stochastic}. Hence, $v(t)$ is ergodic and strong mixing (also known as $\alpha$-mixing) with an exponential rate; see  \citet{bradley2005basic}.  The discrete-time process $\{v_n\}$ is also ergodic and inherits the exponentially fast strong mixing properties of $v(t)$. It can be verified that the hidden chain $\{(v_n,IV_n,I_n)\}$ is also ergodic and strong mixing with an exponential rate, noting that $I_n = (v_n - v_{n-1} - k\theta h + k IV_n)/\sigma_v$.
%Following Definition 3.1 in \citet{genon2000stochastic}, the Heston model is the so-called ``hidden Markov model."  
Furthermore, according to Proposition 3.1 in \citet{genon2000stochastic}, the observable process $\{y_n\}$ is strictly stationary and strong mixing with an exponential rate. 
Applying the central limit theorem for strictly stationary strong mixing sequences (see Theorem 1.7 in \cite{ibragimov1962some}), we can prove \eqref{eqn:CLT} as Corollary 3.1 in \citet{genon2000stochastic}.

Next, we provide the calculations for the diagonal entries of $\Sigma$, but omit those for the off-diagonal entries since they are very similar.  We now calculate the five diagonal entries of $\Sigma$.
\begingroup
\allowdisplaybreaks
\begin{itemize}
\item[$\sigma_{11}$:]  Based on \eqref{eqn:cov-y-with-m-lagged-y}, we have
\begin{eqnarray*}
\sigma_{11}
& = & \lim_{N\to\infty}\frac{1}{N}\left( N \var(y_n) + 2\sum_{1\leq i<j\leq N} \cov(y_i,y_j)\right) \\
& = & \var(y_n) + 2 \lim_{N\to\infty}\frac{1}{N} \sum_{1\leq i<j\leq N} e^{-(j-i-1)kh}\cov(y_n,y_{n+1}) \\
& = & \var(y_n) + 2 \lim_{N\to\infty}\frac{1}{N} \left[ \frac{N-1}{1-e^{-kh}} - \frac{e^{-kh}-e^{-Nkh}}{(1-e^{-kh})^2} \right]\cov(y_n,y_{n+1}) \\
& = & \var(y_n) + \frac{2}{1-e^{-kh}}\cov(y_n,y_{n+1}) \\
& = & \theta h + \theta h \left(\frac{\sigma_v^2}{4k^2} - \frac{\rho\sigma_v}{k} \right).
\end{eqnarray*}
\item[$\sigma_{22}$:] We have
\begin{eqnarray*}
\sigma_{22}
& = & \lim_{N\to\infty}\frac{1}{N}\left( N \var(z_n^1) + 2 \sum_{1\leq i<j\leq N} \cov(z_i^1,z_j^1)\right) \\
& = & \var(z_n^1)  + \frac{\tilde{h}^2}{2(1-e^{-2kh})}\cov(z_n^1,v_n^2)\\
&   & + 2\left[\frac{\tilde{h} - he^{-kh}}{1-e^{-kh}}\left(\frac{\sigma_v^2}{2k^2} - \frac{\rho\sigma_v}{k}\right) + \frac{1}{k} - \frac{\tilde{h}^2(\sigma_v^2/k + 2\theta)}{4(1-e^{-2kh})} \right]\cov(z_n^1,v_n),
\end{eqnarray*}
where the off-diagonal entries of $[\cov(z_i^1,z_j^1)]_{N\times N}$ can be calculated as:
\begin{align*}
   \cov(z_i^1,z_{j}^1)
   &= \frac{\tilde{h}^2}{4} e^{2k(j-i-1)h}\cov(z_n^1,v_n^2) - \frac{\tilde{h}^2}{4}e^{2k(j-i-1)h}\left(\frac{\sigma_v^2}{k} + 2\theta\right) \cov(z_n^1,v_n) \\
   &\quad + e^{-k(j-i-1)h}\left[\left(\frac{\sigma_v^2}{2k^2} - \frac{\rho\sigma_v}{k}\right)(\tilde{h} - he^{-kh}) + \tilde{h} \right]\cov(z_n^1,v_n).
\end{align*}
\item[$\sigma_{33}$:] We have
\begin{eqnarray*}
\sigma_{33}
& = & \lim_{N\to\infty}\frac{1}{N}\left( (N-1)\var(z_n^2)  + 2 \sum_{1\leq i<j\leq N-1} \cov(z_i^2,z_j^2)\right) \\ 
& = & \var(z_n^2) + 2\cov(z_n^2,z_{n+1}^2) + 2\lim_{N\to\infty}\frac{1}{N}\sum_{1< i+1 < j \leq N-1} \cov(z_i^2,z_j^2)\\
& = & \var(z_n^2) + 2 \cov(z_n^2,z_{n+1}^2) + \frac{\tilde{h}^2e^{-kh}}{2(1-e^{-2kh})}\cov(z_n^2,v_{n+1}^2)\\
&   & + 2\left[\frac{e^{-kh}h\tilde{h}}{1-e^{-kh}} \left(\frac{\sigma_v^2}{4k} - \frac{\rho\sigma_v}{2}\right) - \frac{e^{-kh}\tilde{h}^2}{1-e^{-2kh}}\left(\frac{\sigma_v^2}{4k} + \frac{\theta}{2} \right) \right]\cov(z_n^2,v_{n+1}),
\end{eqnarray*}
where $\cov(z_i^2,z_j^2)$ ($i+1<j$) can be calculated as
\begin{align*}
    \cov(z_i^2,z_{j}^2)
    &= \frac{\tilde{h}^2}{4} e^{-k(2j-2i-3)h}\cov(z_n^2,v_{n+1}^2)\\
    &\quad - \frac{\tilde{h}^2}{4} e^{-k(2j-2i-3)h}\left(\frac{\sigma_v^2}{k} + 2\theta\right) \cov(z_n^2,v_{n+1})\\
    &\quad + e^{-k(j-i-2)h}e^{-kh} \left(\frac{\sigma_v^2}{4k} - \frac{\rho\sigma_v}{2}\right)h\tilde{h}\cov(z_n^2,v_{n+1}).
\end{align*}
\item[$\sigma_{44}$:] We have
\begin{eqnarray*}
    \sigma_{44}
    & = & \lim_{N\to\infty}\frac{1}{N}\left( (N-2)\var(z_n^3) + 2\sum_{1\leq i<j \leq N-2} cov(z_i^3,z_j^3) \right) \\
    & = & \var(z_n^3) + 2\cov(z_n^3,z_{n+1}^3) + 2\cov(z_n^3,z_{n+2}^3) \\
    &   & + 2\lim_{N\to\infty}\frac{1}{N}\sum_{2<i+2<j\leq N-2} \cov(z_i^3,z_j^3)\\
    & = & \var(z_n^3) + 2\cov(z_n^3,z_{n+1}^3) + 2 \cov(z_n^3,z_{n+2}^3) + \frac{e^{-2kh}\tilde{h}^2}{2(1-e^{-2kh})}\cov(z_n^3,v_{n+2}^2) \\
    &   & + 2\left[ \frac{e^{-2kh}h\tilde{h}}{1-e^{-kh}}\left( \frac{\sigma_v^2}{4k} - \frac{\rho\sigma_v}{2} \right) - \frac{\tilde{h}^2 }{1-e^{-2kh}}\left(\frac{\sigma_v^2}{4k} + \frac{\theta}{2} \right) \right]\cov(z_n^3,v_{n+2}),
\end{eqnarray*}
where $\cov(z_i^3,z_j^3)$ ($ i+2 <j$) can be calculated as
\begin{align*}
    \cov(z_i^3,z_{j}^3)
    &= \frac{\tilde{h}^2}{4}e^{-2k(j-i-2)h}\cov(z_n^3,v_{n+2}^2) \\
    &\quad - e^{-2k(j-i-3)h}\left(\frac{\sigma_v^2}{4k} + \frac{\theta}{2} \right)\tilde{h}^2 \cov(z_n^3,v_{n+2})\\
    &\quad + e^{-k(j-i-1)h}\left( \frac{\sigma_v^2}{4k} - \frac{\rho\sigma_v}{2} \right)h\tilde{h}\cov(z_n^3,v_{n+2}).
\end{align*}
\item[$\sigma_{55}$:] We have
\begin{eqnarray*}
    \sigma_{55}
    & = & \lim_{N\to\infty}\frac{1}{N}\left( (N-1)\var(z_n^4) + 2\sum_{1\leq i<j \leq N-1} \cov(z_i^4,z_j^4) \right)\\
    & = & \var(z_n^4) + 2\cov(z_n^4,z_{n+1}^4) + 2\lim_{N\to\infty}\frac{1}{N}\sum_{1<i+1<j\leq N-1} \cov(z_i^4,z_j^4)\\
    & = & \var(z_n^4) + 2\cov(z_n^4,z_{n+1}^4) + 2C_{51}(z_n^4,v_{n+1}) + 2C_{52}(z_n^4,v_{n+1}^2) \\
    &   & + 2C_{53}\cov(z_n^4,v_{n+1}^3),
\end{eqnarray*}
where $\cov(z_i^4,z_j^4)$ ($i+1<j$) can be calculated as
\begin{align*}
    &\cov(z_i^4,z_{j}^4)\\
    &= - F_1 e^{-3k(j-i-2)h}\cov(z_n^4,v_{n+1}^3) \\
    &\quad + \left(F_4 e^{-3k(j-i-2)h} + F_5e^{-2k(j-i-2)h}\right)\cov(z_n^4,v_{n+1}^2)\\
    &\quad + \left(F_6e^{-3k(j-i-2)h}
    -F_7 e^{-2k(j-i-2)h}
    + F_8 e^{-k(j-i-2)h}\right)\cov(z_n^4,v_{n+1}).
\end{align*}
\end{itemize}
\endgroup
The off-diagonal entries of $\Sigma$ can be computed similarly. This completes our proof.
\end{proof}

\begin{remark}
Though the final explicit formulas for the entries of $\Sigma$ (except $\sigma_{11}$) are not given, they are not difficult to compute, as shown in Appendix B.  The complete derivations are quite tedious and the final formulas are lengthy.  
\end{remark}

Having obtained the moment and covariance estimates, we can now develop the estimators for the parameters.  First, let us consider $k$, the mean reversion velocity parameter in the variance (volatility) process.  Note that \eqref{eqn:cov-y-with-m-lagged-y} leads to
\begin{equation*}
    k = \frac{1}{(m-1)h}\ln \left( \frac{\cov(y_n,y_{n+1})}{\cov(y_n,y_{n+m})} \right), 
\end{equation*}
for $m = 2,3,\cdots$.  Therefore, we construct the following estimator for $k$:
\begin{equation}
    \hat{k} = \frac{1}{M-1}\sum_{m=2}^M\frac{1}{(m-1)h}\ln \left( \frac{\hat{\cov}(y_n,y_{n+1})}{\hat{\cov}(y_n,y_{n+m})} \right),
\end{equation}
where $2\leq  M < N$ (usually $M$ takes a small value, e.g., $M\leq 10$) and
\begin{align*}
    \hat{\cov}(y_n,y_{n+m}) \triangleq \frac{1}{N-m}\sum_{n=1}^{N-m}(Y_n - \overline{Y})(Y_{n+m} -\overline{Y}).
\end{align*}

The estimators for the other four parameters are given by
\begin{align}
    \hat{\theta} &= S^2/h - \frac{2(h-\tilde{h}_{\hat{k}})}{h\hat{k}\tilde{h}_{\hat{k}}^2}\hat{\cov}(y_n,y_{n+1}),\\
    \hat{\mu} &= \overline{Y}/h + \hat{\theta}/2,\\
    \hat{\sigma}_v^2 &= \frac{4\hat{k}\overline{Y} + [8\hat{d}_h/(\hat{\theta} \tilde{h}_{\hat{k}}^3)]\hat{\cov}(y_n,y_{n+1}) - 2\hat{k}\frac{\hat{\cov}(y_n^2,y_{n+1})}{\hat{\cov}(y_n,y_{n+1})} }{\hat{\theta}\tilde{h}_{\hat{k}}^2/(2\hat{\cov}(y_n,y_{n+1})) - \hat{d}_h/(\hat{k}\tilde{h}_{\hat{k}}) },\\
    \hat{\rho} &= \frac{\hat{\sigma}_{v}}{4\hat{k}} - \frac{2}{\hat{\theta}\hat{\sigma}_{v}\tilde{h}_{\hat{k}}^2}\hat{\cov}(y_n,y_{n+1}),
\end{align}
where $\tilde{h}_{\hat{k}} \triangleq (1-e^{-\hat{k}h})/\hat{k}$ and $\hat{d}_h \triangleq he^{-\hat{k}h} - \tilde{h}_{\hat{k}}$.

In what follows, we let $M=2$ in $\hat{k}$ and 
denote the corresponding estimators as a continuous mapping $g: \mathbb{R}^{5} \rightarrow \mathbb{R}^5$, i.e.,
\begin{align*}
    g(\hat{\boldsymbol{\gamma}}) = (\hat{k}, \hat{\theta},\hat{\sigma}_v,\hat{\mu},\hat{\rho})^T.
\end{align*}
(Our analysis remains the same for $M>2$.)
We now present the following central limit theorem for the parameter estimators.
\begin{theorem}
    Suppose $M=2$. Then,
    \begin{equation}
        \lim_{N\rightarrow\infty}\sqrt{N}((\hat{k}, \hat{\theta},\hat{\sigma}_v,\hat{\mu},\hat{\rho})^T - (k,\theta,\sigma_v,\mu,\rho)^T) \overset{d}{=} \mathcal{N}(0,J_g\Sigma J_g^{T})
    \end{equation}
    where $J_g$ is the Jacobian of $g$.
\end{theorem}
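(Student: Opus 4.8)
The plan is to obtain this result as a direct application of the \emph{delta method} to the central limit theorem already established in Theorem~\ref{thm4.1}. The key observation is that the parameter estimators are written as $g(\hat{\boldsymbol{\gamma}})$ for an explicit map $g:\mathbb{R}^5\to\mathbb{R}^5$, and that by construction the moments/covariances in $\boldsymbol{\gamma}$ satisfy $g(\boldsymbol{\gamma}) = (k,\theta,\sigma_v,\mu,\rho)^T$; this latter identity is exactly the content of the inversion formulas derived in Section~3 (the expressions for $k$, $\hat\theta$, $\hat\mu$, $\hat\sigma_v^2$, $\hat\rho$ with sample quantities replaced by population ones). So the theorem reduces to a smooth-function-of-means argument.

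The steps, in order, are as follows. First, recall from Theorem~\ref{thm4.1} that $\sqrt{N}(\hat{\boldsymbol{\gamma}}-\boldsymbol{\gamma})\overset{d}{\longrightarrow}\mathcal{N}(0,\Sigma)$. Second, verify that $g$ is continuously differentiable in a neighborhood of the true value $\boldsymbol{\gamma}$. Here one inspects the components of $g$: $\hat k$ involves $\ln(\hat{\cov}(y_n,y_{n+1})/\hat{\cov}(y_n,y_{n+2}))$, which is smooth wherever the two covariances are nonzero and of the same sign; $\hat\theta,\hat\mu$ are affine in $\hat{\boldsymbol{\gamma}}$ and in $\tilde h_{\hat k}$, which is itself a smooth function of $\hat k$; $\hat\sigma_v^2$ is a ratio whose denominator is nonvanishing at the true parameter (this must be checked from \eqref{eq_ylag1} and the formula for $\cov(y_n^2,y_{n+1})$); and $\hat\rho=\hat\sigma_v/(4\hat k)-\cdots$ is smooth once $\hat\sigma_v^2>0$. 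Under the model's standing assumptions ($k>0$, $\theta>0$, $\sigma_v\ne 0$, $\sigma_v^2\le 2k\theta$), all these non-degeneracy conditions hold at $\boldsymbol{\gamma}$, so $g$ is $C^1$ near $\boldsymbol{\gamma}$ and its Jacobian $J_g := Dg(\boldsymbol{\gamma})$ is well-defined. Third, apply the multivariate delta method (e.g.\ via a first-order Taylor expansion of $g$ around $\boldsymbol{\gamma}$ together with Slutsky's theorem): $\sqrt{N}(g(\hat{\boldsymbol{\gamma}})-g(\boldsymbol{\gamma}))\overset{d}{\longrightarrow}\mathcal{N}(0,J_g\Sigma J_g^T)$. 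Finally, substitute $g(\hat{\boldsymbol{\gamma}})=(\hat k,\hat\theta,\hat\sigma_v,\hat\mu,\hat\rho)^T$ and $g(\boldsymbol{\gamma})=(k,\theta,\sigma_v,\mu,\rho)^T$ to conclude.

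The main obstacle is not the limiting argument itself — the delta method is standard — but rather the verification that $g$ is differentiable at $\boldsymbol{\gamma}$ with a nonsingular-enough structure, i.e.\ that none of the denominators appearing in $\hat\sigma_v^2$ and $\hat\rho$, nor the argument of the logarithm in $\hat k$, degenerates at the true parameter value. Checking the denominator of $\hat\sigma_v^2$, namely $\hat\theta\tilde h_{\hat k}^2/(2\hat{\cov}(y_n,y_{n+1})) - \hat d_h/(\hat k\tilde h_{\hat k})$ evaluated at population quantities, requires plugging \eqref{eq_ylag1} into it and showing the resulting expression in $(k,\theta,\sigma_v,\rho,h)$ is nonzero; this is a short but genuine computation. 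One should also note that the map $g$ as written presumes $\hat\sigma_v^2\ge 0$ so that $\hat\sigma_v$ is real — this holds with probability tending to one since $\hat\sigma_v^2\to\sigma_v^2>0$, and on that event $g$ is smooth. A secondary (cosmetic) point is that the entries of $\Sigma$ beyond $\sigma_{11}$ are only given implicitly, and likewise $J_g$ is obtained by routine differentiation of the displayed formulas for the estimators; the statement correctly leaves $J_g\Sigma J_g^T$ in that implicit form, so no further simplification is needed for the proof to be complete.
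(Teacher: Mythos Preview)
Your proposal is correct and follows essentially the same approach as the paper: the paper's proof consists of a single sentence invoking the delta method (citing Theorem~3.1 in \citet{van2000asymptotic}) applied to Theorem~\ref{thm4.1}, together with a remark on the form of $J_g$. Your version is in fact more careful, since you explicitly flag the need to verify differentiability of $g$ at $\boldsymbol{\gamma}$ (nonvanishing of the logarithm's argument and of the denominators in $\hat\sigma_v^2$ and $\hat\rho$), which the paper leaves implicit.
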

\begin{proof}
    It follows directly by applying the delta method (Theorem 3.1 in \citet{van2000asymptotic}) and Theorem 4.1.  $J_g$ is a $5\times 5$ matrix of partial derivatives with respect to the entries of $g$. For instance, the first row of $J_g$ is the gradient of $k$ with respect to $\boldsymbol{\gamma}$, i.e.,
    \begin{equation*}
        \left(0,0, \frac{1}{h\cov(y_n,y_{n+1})}, - \frac{1}{h\cov(y_n,y_{n+2})},0\right).
    \end{equation*}
\end{proof}

\section{Numerical experiments}
In this section, we provide numerical experiments to test the estimators derived in the previous sections.  First, we test the accuracy of our estimators under different parameter value settings.  Then, we validate the asymptotic behaviors of our estimators as the size of samples increases. Next, we vary sample interval $h$ to analyze its effect on our estimators. 
Finally, we compare our method with the MCMC method.

In the first set of experiments, we consider six different settings of parameter values, with S0 being the base setting in which $\mu=0.125$, $k=0.1$, $\theta=0.25$, $\sigma_v=0.1$, $\rho=-0.7$.  Each of the other five settings differs from S0 in only one parameter value: S1 increases $\mu$'s value to $0.4$, S2 decreases $k$'s value to $0.03$, S3 increases $\theta$'s value to $0.5$, S4 increases $\sigma_v$'s value to $0.2$, and S5 decreases (in absolute value) $\rho$'s value to $-0.3$. The values of the parameters in the volatility process, $k$, $\theta$, and $\sigma_v$, are the same as those in \citet{bollerslev2002estimating} except for S3. 

To generate sample observations for the Heston model, we use the first order Euler approximation method in which we set  $h=1$ and partition each interval into 20 smaller segments for approximation. For each parameter setting, we run 400 replications with 400K samples for each replication. The numerical results are presented in Table~\ref{tab:estimation-under-different-parSettings} with the format ``mean $\pm$ standard deviation" based on these 400 replications (the format remains the same for all numerical results in this section).  The results show our estimators are fairly accurate.  Of course, the estimation accuracy can be improved by increasing $N$ as will be shown by the next set of experiments.  

\begin{table}
    \centering
    \caption{Numerical results under different parameter settings}
    \label{tab:estimation-under-different-parSettings}
    \begin{tabular}{cccccc}
     \toprule
        Setting  & $\mu$ & $k$ & $\theta$ & $\sigma_v$ & $\rho$  \\
       \midrule
        S0 &  \emph{0.125}	&\emph{0.1}	&\emph{0.25}	&\emph{0.1}	&\emph{-0.7}\\
            &0.125±0.001	&0.101±0.015	&0.25±0.001	&0.1±0.009	&-0.706±0.043\\
            \cmidrule{2-6}
        S1 &\textbf{\emph{0.4}}	&\emph{0.1}	&\emph{0.25}	&\emph{0.1}	&\emph{-0.7}\\
            & 0.4±0.001	&0.1±0.015	&0.249±0.001	&0.1±0.009	&-0.71±0.047\\
            \cmidrule{2-6}
        S2 &\emph{0.125}	&\textbf{\emph{0.03}}	&\emph{0.25}	&\emph{0.1}	&\emph{-0.7}\\
             &0.125±0.001	&0.03±0.01	&0.25±0.003	&0.099±0.018	&-0.742±0.184\\
             \cmidrule{2-6}
        S3 &\emph{0.125}	&\emph{0.1}	&\textbf{\emph{0.5}}	&\emph{0.1}	&\emph{-0.7}\\
             &0.125±0.001	&0.099±0.013	&0.499±0.002	&0.1±0.009	&-0.711±0.055\\
             \cmidrule{2-6}
        S4 &\emph{0.125}	&\emph{0.1}	&\emph{0.25}	&\textbf{\emph{0.2}}	&\emph{-0.7}\\
             &0.125±0.001	&0.101±0.007	&0.249±0.002	&0.2±0.008	&-0.708±0.028\\
             \cmidrule{2-6}
        S5 &\emph{0.125}	&\emph{0.1}	&\emph{0.25}	&\emph{0.1}	&\textbf{\emph{-0.3}}\\
             &0.125±0.001	&0.103±0.026	&0.25±0.001	&0.101±0.015	&-0.304±0.034\\
       \bottomrule
    \end{tabular}
%    \begin{tablenotes}
%      \small
%      \item mean ± std, 400 replications, each with sample size $N = 400K$($1K=1000$).
%    \end{tablenotes}
\end{table}

To test the effect of $N$ on the performance of our estimators, we run the second set of experiments in which we increase $N$ from $25K$ to $1600K$ for S0. The results are provided in Table~\ref{tab:changeN}, which show as $N$ increases the accuracy of our estimators improves at the rate of $1/\sqrt{N}$.

\begin{table}
    \centering
    \caption{Asymptotic behavior as $N$ increases}
    \label{tab:changeN}
    \begin{tabular}{cccccc}
      \toprule
       & $\mu$ & $k$ & $\theta$ & $\sigma_v$ & $\rho$  \\
       \cmidrule{2-6}
      N & $0.125$ & $0.1$ & $0.25$ & $0.1$ & $-0.7$ \\
      \midrule
        25K &0.124±0.004 &0.112±0.059 &0.249±0.004 &0.106±0.034 &-0.796±0.363\\
        100K &0.125±0.002 &0.102±0.03 &0.25±0.002 &0.1±0.019 &-0.726±0.105\\
        400K &0.125±0.001 &0.1±0.014 &0.25±0.001 &0.1±0.008 &-0.711±0.044\\
        1600K &0.125±0 &0.1±0.007 &0.25±0.001 &0.1±0.004 &-0.708±0.022\\
      \bottomrule
    \end{tabular}
\end{table}

%Finally, 
We run the third set of experiments to test the effect of $h$ on our estimators.  Again, we consider the parameter setting S0 and increase $h$ from 0.5 to 2.  The results are given in Table~\ref{tab:changeh}.  It is clear that the accuracy of our estimators is not very sensitive to the value of $h$.  

\begin{table}
    \centering
    \caption{Effect of the value of $h$}
    \label{tab:changeh}
    \begin{tabular}{cccccc}
      \toprule
       & $\mu$ & $k$ & $\theta$ & $\sigma_v$ & $\rho$  \\
       \cmidrule{2-6}
      $h$ & $0.125$ & $0.1$ & $0.25$ & $0.1$ & $-0.7$ \\
      \midrule
        0.5 &0.125±0.001 &0.1±0.053 &0.25±0.001 &0.099±0.028 &-0.797±0.354\\
        1 &0.125±0.001 &0.101±0.015 &0.25±0.001 &0.1±0.009 &-0.712±0.046\\
        2 &0.125±0.001 &0.1±0.005 &0.25±0.001 &0.1±0.004 &-0.709±0.035\\
        4 &0.125±0.001 &0.101±0.004 &0.249±0.001 &0.1±0.003 &-0.711±0.029\\
      \bottomrule
    \end{tabular}
\end{table}

Finally, we compare our method with the MCMC method proposed in \cite{eraker2003impact} and \cite{johannes2010mcmc}; for the detailed implementation of the MCMC method, see \cite{Wu_MCMC_for_Estimating_2023}. The comparison is based on the parameter setting S0, again with 400 replications with 400K samples for each replication (the results are very similar for other parameter settings).  For the MCMC method, we use $\mu=0.0625,k=0.05,\theta=0.125,\sigma_v=0.05,\rho=-0.35$ as the initial values for the five parameters in the iteration procedure. For each replication, we run MCMC for 10,000 iterations with the first 5000 iterations as warm-up. The results are given in Table~\ref{tab:comparison-with-mcmc}, which shows our method performs much better than MCMC, both in terms of accuracy and computational time.  MCMC is very time-consuming and it takes more than one hour for each replication while our method takes less than a second. 

\begin{table}
    \centering
    \caption{Comparison with MCMC estimation}
    \label{tab:comparison-with-mcmc}
    \begin{tabular}{cccc}
      \toprule
         & True value & Our method & MCMC \\
      \midrule
        $\mu$ & 0.125 & 0.125±0.001 & 0.125±0.001\\
        $k$ & 0.1 &0.101±0.015 &0.063±0.001 \\
        $\theta$ & 0.25 &0.250±0.001 & 0.231±0.001\\
        $\sigma_v$ & 0.1 &0.100±0.009 &0.053±0.001 \\
        $\rho$ & -0.7 &-0.706±0.043 &-0.559±0.010 \\
        %Computation Time & - &0.292±0.107 &4269.068±353.023 \\
        average time (seconds) & - & 0.29 & 4,280 \\
      \bottomrule
    \end{tabular}
\end{table}

\section{Conclusion}
In this paper, we consider the problem of parameter estimation for affine SV models.  We first obtain the moments and covariances of the asset price based on which we develop MM estimators.  A key of our method is to develop a recursive procedure to calculate the moments and covariances of the asset price.  Our MM estimators are simple and easy to implement.  We also establish the central limit theorem for our estimators in which the explicit formulas for the asymptotic covariance matrix are given.  Finally, we provide numerical results to test and validate our method.  The method proposed in this paper can potentially be applied to other SV models.  This is a possible future research direction.  

\section*{Acknowledgements}
This work was supported by the  National Natural Science Foundation of China (NSFC) under Grants 72033003 and 72350710219; the China Postdoctoral Science Foundation under Grant 2023M732054; the Shandong Provincial Natural Science Foundation under Grant ZR2023QG159; and the Shandong Postdoctoral Science Foundation under Grant SDCX-RS-202303004.

\section*{Data Availability Statement}
Data sharing is not applicable to this article as no new data were created or analyzed in this study.

\bibliography{references}

\appendix
\section*{Appendix A: Covariance computation}
Here we provide some intermediate results omitted in Section 2.

To calculate $\var(y_n)$, we first note $\cov(IV_n,I_n^*) = 0$, $\cov(v_{n-1},I_n^*) = 0$, $\cov(\ie_n,I_n^*) = 0$, and $\cov(I_n,I_n^*) = 0$.  Hence,
\begin{equation*}
\var(y_n) = \frac{1}{4}\var(IV_n) - \rho \cov(IV_n,I_n) + \rho^2 \var(I_n) + (1-\rho^2) \var(I_n^*).
\end{equation*}
%\begin{equation*}
%   cov(v_{n-1},I_n^*) = 0, \quad cov(\ie_n,I_n^*) = 0,
%   \quad cov(I_n,I_n^*) = 0. 
%\end{equation*}
First, we note $\var(I_n^*) = \E[(I_n^*)^2] = \E[IV_n] = \theta h.$
We also have
\begin{align*}
    \var(IV_n)
    &= \tilde{h}^2\var(v_{n-1}) + \frac{\sigma_v^2}{k^2}\left( e^{-2knh}\var(\ie_n)  + \var(I_n) - 2e^{-knh}\cov(\ie_n,I_n)\right).
\end{align*}
The four components in the above equation are given by
\begin{align*}
    \var(v_{n-1}) &= \frac{\theta\sigma_v^2}{2k},
    &\var(\ie_n) &= \frac{\theta [e^{2knh}-e^{2k(n-1)h}]}{2k},\\
    \var(I_n) &= \theta h,
    &\cov(I_n, \ie_n) &= \frac{\theta [e^{knh}-e^{k(n-1)h}]}{k}.
\end{align*}
In what follows, we illustrate how $\cov(I_n, \ie_n)$ can be computed in detail.  Since
\begin{eqnarray*}
    dI_{n-1,t} & = & \sqrt{v(t)}dw^v(t), \\
    d\ie_{n-1,t} & = & e^{kt}\sqrt{v(t)}dw^v(t),
\end{eqnarray*}
we have by It\^{o} formula
\begin{equation*}
    d(I_{n-1,t}\ie_{n-1,t}) = \left(\ie_{n-1,t}\sqrt{v(t)} + I_{n-1,t}e^{-kt}\sqrt{v(t)}\right)dw^v(t) + e^{kt}v(t)dt.
\end{equation*}
Note
\begin{equation*}
    \int_{(n-1)h}^{t}\left(\ie_{n-1,s}\sqrt{v(s)} + I_{n-1,s}e^{-ks}\sqrt{v(s)}\right)dw^v(s)
\end{equation*}
is a martingale with zero mean, so
\begin{equation*}
    \E[I_{n-1,t}\ie_{n-1,t}]
    = \int_{(n-1)h}^{t}e^{ks}\E[v(s)]ds = \frac{\theta [e^{kt}-e^{k(n-1)h}]}{k}. 
\end{equation*}
The interchange of expectation and integration above is guaranteed by Lebesgue's dominated convergence theorem. By letting $t=nh$, we can obtain 
$$
\cov(I_n,\ie_n)= \E[I_n\ie_n] - \E[I_n]E[\ie_n] = \E[I_n\ie_n]
= \frac{\theta [e^{knh}-e^{k(n-1)h}]}{k}.
$$
In summary, we have 
\begin{equation*}
    \var (IV_n)
    = \frac{\theta \sigma_v^2}{k^2}(h-\tilde{h}).
\end{equation*}

For $\cov(y_n,y_{n+1})$, $\cov(y_n,y_{n+2})$, we have
\begin{eqnarray*}
   \cov(y_n,y_{n+1}) &=&\frac{1}{4}\cov(IV_n,IV_{n+1}) - \frac{1}{2}\rho \cov(I_{n},IV_{n+1}), \\
   \cov(y_n,y_{n+2}) &=&\frac{1}{4}\cov(IV_n,IV_{n+2}) - \frac{1}{2}\rho \cov(I_{n},IV_{n+2}),
\end{eqnarray*}
since $\cov(IV_n,I_{n+i}^*)=\cov(I_n^*,IV_{n+i})=\cov(I_n,I_{n+i}^*)=0$ for $i\geq 0$ and $\cov(I_n^*,I_{n+i}^*)=0$ for $i\geq 1$.
After some calculations, we can obtain
\begin{eqnarray*}
\cov(IV_n,IV_{n+1}) &=& \tilde{h} \cov(IV_n, v_{n}) = \frac{\theta \tilde{h}^2\sigma_v^2}{2k}, \\
\cov(I_n,IV_{n+1}) &=& \tilde{h}^2\theta\sigma_v, \\
\cov(IV_n,IV_{n+2}) &=& e^{-kh}\cov(IV_n,IV_{n+1}),\\ 
\cov(I_n,IV_{n+2})
    &=& e^{-kh}\cov(I_n,IV_{n+1}),
\end{eqnarray*}
hence, we have (\ref{eq_ylag1}) and (\ref{eq_ylag2}).

Finally, we consider $\cov(y_n^2,y_{n+1})$.  First, we have
\begin{eqnarray*}
\cov(y_n^2,y_{n+1})
    &=& -\frac{1}{8}\tilde{h}\cov(IV_n^2,v_n) - \frac{1}{2}\tilde{h}(1-\rho^2)\cov((I_n^*)^{2}, v_n) \\
    & & + \frac{1}{2} \mu h \tilde{h} \cov(IV_n, v_n)
    - \frac{1}{2}\tilde{h}\rho^2\cov(I_n^2,v_n) \\
    & & - \mu h \tilde{h}\rho \cov(I_n,v_n) + \frac{1}{2}\tilde{h}\rho \cov(IV_nI_n,v_n).
\end{eqnarray*}
We have already computed $\cov(IV_n,v_n)$ earlier.  The other covariances can be computed based on the following variances and covariances:
\begin{align*}
    &\var(v_{n-1}),& &\var(\ie_n),& &\cov(I_n,\ie_n),\\
    &\cov(v_{n-1}^2,v_{n-1}),& &\cov(v_{n-1}\ie_n,v_{n-1}),& &\cov(v_{n-1}I_n,v_{n-1}),\\
    &\cov(v_{n-1}\ie_n,\ie_n),& &\cov(v_{n-1}I_n,\ie_n),& &\cov(\ie_n^2,v_{n-1}),\\
    &\cov(\ie_n^2,\ie_n),& &\cov(\ie_nI_n,v_{n-1}),& &\cov(\ie_nI_n,\ie_n),\\
    &\cov(I_n^2,v_{n-1}),& &\cov(I_n^2,\ie_n),& &\cov((I_n^*)^2,v_{n-1}).
\end{align*}
In what follows, we use $\cov(\ie_nI_n,\ie_n)$ as an illustrative example to show how these terms can be computed.
Using It\^{o} formula, we have
$$
    d(\ie_{n-1,t}^2I_{n-1,t})
    = \alpha (t)dw^v(t)
     + \beta (t) dt,
$$
where
\begin{eqnarray*}
\alpha (t)
&=& \ie_{n-1,t}^2\sqrt{v(t)} + 2I_{n-1,t}\ie_{n-1,t}e^{kt}\sqrt{v(t)},\\
\beta (t)
&=& I_{n-1,t}e^{2kt}v(t) + 2\ie_{n-1,t}e^{kt}v(t).
\end{eqnarray*}
Therefore, $\ie_{n-1,t}^2I_{n-1,t}$ can be expressed as
\begin{equation*}
    \ie_{n-1,t}^2I_{n-1,t} = \int_{(n-1)h}^t \alpha(s)dw^v(s) + \int_{(n-1)h}^t \beta(s)ds.
\end{equation*}
The expectation of the first integral on the right hand side of the above equation is zero, hence
\begin{equation*}
    \E[\ie_{n-1,t}^2I_{n-1,t}] = \int_{(n-1)h}^t\E[\beta(s)]ds.
\end{equation*}
(We have again interchanged the expectation and integration operations in the above equation by applying Lebesgue's dominated convergence theorem.)  We can easily compute $\E[I_{n-1,t}v(t)]$ and $\E[\ie_{n-1,t}v(t)]$ in $\E[\beta(t)]$ by expanding $v(t)$ via \eqref{eqn:v(t)}, which gives us
\begin{align*}
    \E[\ie_{n-1,t}^2I_{n-1,t}]
    &= \frac{\theta \sigma_v}{k}\left[ \frac{1}{k}e^{2kt} - \frac{1}{k}e^{kt+k(n-1)h} - e^{2k(n-1)h}[t-(n-1)h] \right].
\end{align*}
By letting $t=nh$ and also noting $E[\ie_n]=0$, we have
\begin{equation*}
\cov(\ie_nI_n,\ie_n) = \E[\ie_n^2I_n]
    = \frac{\theta \sigma_v}{k^2}e^{2knh}(1-e^{-kh}-khe^{-2kh}).
\end{equation*}
All other variances and covariances can be computed in a similar way. 
Moreover, we point out that we can in fact compute the moments and covariances of \textit{any} (positive integer) order and a  recursive calculation procedure is provided in Appendix B.

\section*{Appendix B: A recursive procedure for computing moments and covariances}
Here we discuss how the moments and covariances of $y_n$ can be calculated.  Define 
\begin{align*}
    y_{n-1,t} 
    %&\triangleq \int_{(n-1)h}^t (\mu-\frac{1}{2}v(s))ds + \int_{(n-1)h}^t\sqrt{v(s)}dw^s(s),
    &\triangleq \mu [t-(n-1)h] - \frac{1}{2}IV_{n-1,t} + \rho I_{n-1,t} + \sqrt{1-\rho^2}I_{n-1,t}^*,
\end{align*}
then
\begin{align*}
    &y_{n-1,t} - \E[y_{n-1,t}]\\
    &= \beta_{n-1,t}\theta - \beta_{n-1,t}v_{n-1}
     + \frac{\sigma_v}{2k}e^{-kt}\ie_{n-1,t} + \left(\rho - \frac{\sigma_v }{2k}\right)I_{n-1,t} + \sqrt{1-\rho^2} I_{n-1,t}^*,
\end{align*}
where $\beta_{n-1,t} \triangleq (1-e^{-k[t-(n-1)h]})/(2k)$.
The $l$th central moment of $y_{n-1,t}$, denoted by $\cm_l(y_{n-1,t})$, can be computed based on the following quantities:
\begin{equation}\label{eqn:comb_moment}
    \E[\theta^{n_1}v_{n-1}^{n_2}(e^{-kt}\ie_{n-1,t})^{n_3}I_{n-1,t}^{n_4}I_{n-1,t}^{*n_{5}}], %\tag{B.1}
\end{equation}
where $n_i\geq 0$ ($i=1,2,3,4,5$) and $\sum_{i=1}^{5}n_i=l$.  We can compute \eqref{eqn:comb_moment} as in the following two steps:
\begin{equation*}
 \E[\theta^{n_1}v_{n-1}^{n_2}\E[(e^{-kt}\ie_{n-1,t})^{n_3}I_{n-1,t}^{n_4}I_{n-1,t}^{*n_{5}}|v_{n-1}]],
\end{equation*}
i.e., first take expectation conditioning on $v_{n-1}$, and then take expectation w.r.t.\ $v_{n-1}$.  We will show later that the conditional moment $\E[\ie_{n-1,t}^{n_3}I_{n-1,t}^{n_4}I_{n-1,t}^{*n_{5}}|v_{n-1}]$ is a polynomial of $v_{n-1}$, which implies that \eqref{eqn:comb_moment} can be expressed as the moments of $v_{n-1}$.  
By using \eqref{eqn:v_moment}, we can compute the moment of any order of $v_{n-1}$, hence we can compute \eqref{eqn:comb_moment} as well.  

Next, we consider $\E[\ie_{n-1,t}^{n_3}I_{n-1,t}^{n_4}I_{n-1,t}^{*n_{5}}|v_{n-1}]$.  We separate $\ie_{n-1,t}^{n_3}I_{n-1,t}^{n_4}I_{n-1,t}^{*n_{5}}$ into two parts: $\ie_{n-1,t}^{n_3}I_{n-1,t}^{n_4}$ and $I_{n-1,t}^{*n_{5}}$, since they are driven by two different Wiener processes $w^v(t)$ and $w(t)$, respectively. For $\ie_{n-1,t}^{n_3}I_{n-1,t}^{n_4}$, we have
\begin{align*}
    %d\ie_{n-1,t}^{n_3}
    %&= n_3 \ie_{n-1,t}^{n_3-1}e^{kt}\sqrt{v(t)}dw^v(t) + \frac{1}{2}n_3(n_3-1)\ie_{n-1,t}^{n_3-2}e^{2kt}v(t)dt,\\
    %dI_{n-1,t}^{n_4}
    %&= n_4I_{n-1,t}^{n_4-1}\sqrt{v(t)}dw^v(t) +\frac{1}{2}n_4(n_4-1)I_{n-1,t}^{n_4-2}v(t)dt, \\
    d(\ie_{n-1,t}^{n_3}I_{n-1,t}^{n_4})
    &= c_w(t) dw^v(t)+ c(t) dt
\end{align*}
where
\begin{align*}
    c_w(t) 
    &\triangleq n_3 \ie_{n-1,t}^{n_3-1}I_{n-1,t}^{n_4}\sqrt{v(t)} + n_4 \ie_{n-1,t}^{n_3}I_{n-1,t}^{n_4-1}e^{kt}\sqrt{v(t)},\\
    c(t)
    &\triangleq \bigg[\frac{1}{2}n_3(n_3-1)\ie_{n-1,t}^{n_3-2}I_{n-1,t}^{n_4}e^{2kt} + \frac{1}{2}n_4(n_4-1)\ie_{n-1,t}^{n_3}I_{n-1,t}^{n_4-2}\\
    &\qquad + n_3n_4\ie_{n-1,t}^{n_3-1}I_{n-1,t}^{n_4-1}e^{kt} \bigg] v(t).
\end{align*}
The conditional expectation
\begin{equation*}
    \E[\ie_{n-1,t}^{n_3}I_{n-1,t}^{n_4}|v_{n-1}] = \int_{(n-1)h}^t \E[c(s)|v_{n-1}]ds.
\end{equation*}
Since
\begin{align*}
    v(t)
    &= e^{k(n-1)h}(v_{n-1}-\theta) e^{-kt} + \theta + \sigma_v e^{-kt}\ie_{n-1,t},
\end{align*}
$\E[\ie_{n-1,t}^{n_3}I_{n-1,t}^{n_4}|v_{n-1}]$ can be expressed as
\begin{align}
\E[\ie_{n-1,t}^{n_3}I_{n-1,t}^{n_4}|v_{n-1}] & 
= f_1(\E[\ie_{n-1,t}^{n_3\boldsymbol{-2}}I_{n-1,t}^{n_4}|v_{n-1}],
    \E[\ie_{n-1,t}^{n_3}I_{n-1,t}^{n_4\boldsymbol{-2}}|v_{n-1}],\nonumber\\
    &\qquad~ \E[\ie_{n-1,t}^{n_3\boldsymbol{-1}}I_{n-1,t}^{n_4\boldsymbol{-1}}|v_{n-1}],
     \E[\ie_{n-1,t}^{n_3\boldsymbol{-1}}I_{n-1,t}^{n_4}|v_{n-1}],\nonumber\\
    &\qquad~ \E[\ie_{n-1,t}^{n_3\boldsymbol{+1}}I_{n-1,t}^{n_4\boldsymbol{-2}}|v_{n-1}],
     \E[\ie_{n-1,t}^{n_3}I_{n-1,t}^{n_4\boldsymbol{-1}}|v_{n-1}])\label{eqn:recursive-2items},%\tag{B.2}
\end{align}
where $f_1$ is some function whose details have been left out for simplicity.  

For $I_{n-1,t}^{*n_5}$, we have
\begin{align*}
    dI_{n-1,t}^{*n_5}
    &= n_5I_{n-1,t}^{*n_5-1}\sqrt{v(t)} dw(t) + \frac{1}{2}n_5(n_5-1)I_{n-1,t}^{*n_5-2}v(t)dt.
\end{align*}
Note that $d(\ie_{n-1,t}^{n_3}I_{n-1,t}^{n_4})dI_{n-1,t}^{*n_5} = 0$ because $dw^v(t)dw(t) = 0$.
Hence, 
\begin{align*}
& d(\ie_{n-1,t}^{n_3}I_{n-1,t}^{n_4}I_{n-1,t}^{*n_5}) \\
    &= (\ie_{n-1,t}^{n_3}I_{n-1,t}^{n_4})dI_{n-1,t}^{*n_5} + I_{n-1,t}^{*n_5}d(\ie_{n-1,t}^{n_3}I_{n-1,t}^{n_4})\\
    &= n_5\ie_{n-1,t}^{n_3}I_{n-1,t}^{n_4}I_{n-1,t}^{*n_5-1}\sqrt{v(t)} dw^s(t) + c_w(t)I_{n-1,t}^{*n_5}dw^v(t)\\
    &\quad + \left[\frac{1}{2}n_5(n_5-1) \ie_{n-1,t}^{n_3}I_{n-1,t}^{n_4}I_{n-1,t}^{*n_5-2}v(t)+ c(t)I_{n-1,t}^{*n_5}\right]dt.
\end{align*}
Therefore,
\begin{align*}
    &\E[\ie_{n-1,t}^{n_3}I_{n-1,t}^{n_4}I_{n-1,t}^{*n_5}|v_{n-1}]\\
    & = \int_{(n-1)h}^t\E\left[\frac{1}{2}n_5(n_5-1) \ie_{n-1,s}^{n_3}I_{n-1,s}^{n_4}I_{n-1,s}^{*n_5-2}v(s)+ c(s)I_{n-1,s}^{*n_5}|v_{n-1}\right]ds.
\end{align*}
Again, there exists some function $f_2$ based on which $\E[\ie_{n-1,t}^{n_3}I_{n-1,t}^{n_4}I_{n-1,t}^{*n_5}|v_{n-1}]$ can be expressed as
\begin{align}
    &\E[\ie_{n-1,t}^{n_3}I_{n-1,t}^{n_4}I_{n-1,t}^{*n_5}|v_{n-1}]\nonumber\\
    &= f_2(\E[\ie_{n-1,t}^{n_3}I_{n-1,t}^{n_4}I_{n-1,t}^{*n_5\boldsymbol{-2}}|v_{n-1}],
    \E[\ie_{n-1,t}^{n_3\boldsymbol{+1}}I_{n-1,t}^{n_4}I_{n-1,t}^{*n_5\boldsymbol{-2}}|v_{n-1}],\nonumber\\
    &\qquad~ \E[\ie_{n-1,t}^{n_3\boldsymbol{-2}}I_{n-1,t}^{n_4}I_{n-1,t}^{*n_5}|v_{n-1}],
    \E[\ie_{n-1,t}^{n_3}I_{n-1,t}^{n_4\boldsymbol{-2}}I_{n-1,t}^{*n_5}|v_{n-1}],\nonumber\\
    &\qquad~ \E[\ie_{n-1,t}^{n_3\boldsymbol{-1}}I_{n-1,t}^{n_4\boldsymbol{-1}}I_{n-1,t}^{*n_5}|v_{n-1}],
    \E[\ie_{n-1,t}^{n_3\boldsymbol{-1}}I_{n-1,t}^{n_4}I_{n-1,t}^{*n_5}|v_{n-1}],\nonumber\\
    &\qquad~ \E[\ie_{n-1,t}^{n_3\boldsymbol{+1}}I_{n-1,t}^{n_4\boldsymbol{-2}}I_{n-1,t}^{*n_5}|v_{n-1}], \E[\ie_{n-1,t}^{n_3}I_{n-1,t}^{n_4\boldsymbol{-1}}I_{n-1,t}^{*n_5}|v_{n-1}]), \label{eqn:recursive-3items}. %\tag{B.3}
\end{align}
It should be noted that 
\begin{equation*}
    \E[I_{n-1,t}^{*n_5}|v_{n-1}] = \E[I_{n-1,t}^{n_5}|v_{n-1}].
\end{equation*}
\eqref{eqn:recursive-2items} and \eqref{eqn:recursive-3items} can be used to compute the central moment of any order of $y_{n-1,t}$ recursively, from lower order ones to high order ones.  For example, we can start with the combinations $\{(n_3,n_4,n_5), l=1\}$, then $ \{(n_3,n_4,n_5), l=2\}$, and so on, where $n_3+n_4+n_5=l$. The computations are fairly straightforward but computationally intensive, which can be automated, see \cite{Wu_ajdmom_a_python_2023} for a Python implementation.

\section*{Appendix C: Definitions of $C_2$,$C_3$,$C_4$,$C$,$C_y$,$C_{51}$,$C_{52}$,$C_{53}$}
Here we define $C_2,C_3,C_4,C,C_y,C_{51},C_{52},C_{53}$, which are used in Theorem \ref{thm4.1}. 
\begingroup
\allowdisplaybreaks
\begin{align*}
        D_1
        &\triangleq -\mu h\tilde{h} - \frac{\rho\sigma_v}{k}(\tilde{h} - he^{-kh}) + \frac{1}{2}\theta\tilde{h}(h-\tilde{h}) + \frac{\sigma_v^2}{4k^2}(\tilde{h} + \tilde{h}e^{-kh} - 2he^{-kh}) + \tilde{h}, \\
        D_2
        &\triangleq -\mu h \tilde{h}e^{-kh} - \frac{\rho\sigma_v}{k}e^{-kh}(h -2he^{-kh} + \tilde{h}) + \frac{\theta}{4}\tilde{h}(\tilde{h} - 3\tilde{h}e^{-kh} + 2he^{-kh})\\
        &\quad  + \frac{\sigma_v^2}{4k^2}e^{-kh}(3\tilde{h}e^{-kh} - 4he^{-kh} - \tilde{h} + 2h) + \tilde{h}e^{-kh}, \\ 
        D_3
        &\triangleq (\mu h)^2 e^{-kh} + \mu h\theta(2\tilde{h}e^{-kh} -\tilde{h} - he^{-kh}) -\frac{\sigma_v^2}{k}\mu h (h-\tilde{h})e^{-kh}\\
        &\quad  + 2\mu h^2\rho\sigma_ve^{-kh} + \frac{\rho^2\sigma_v^2}{k}( kh^2+ h - \tilde{h}) - \frac{\rho^2\sigma_v}{k}( h - \tilde{h})\\
        &\quad - \frac{\rho\theta\sigma_v}{k}(\tilde{h} -3\tilde{h}e^{-kh} + 2he^{-2kh} + kh^2e^{-kh} - kh\tilde{h}e^{-kh} ) \\
        &\quad - \frac{\rho\sigma_v^3}{k^2} (2he^{-kh}-2\tilde{h} + kh^2)e^{-kh} + \theta(\tilde{h} + he^{-kh} - 2\tilde{h}e^{-kh})\\ 
        &\quad  + \frac{\theta^2}{4}(h-\tilde{h})(2\tilde{h} + e^{-kh}h - 3e^{-kh}\tilde{h}) + \frac{\sigma_v}{k}(h -\tilde{h})\\
        &\quad + \frac{\theta\sigma_v^2}{8k^2}(-4\tilde{h} + 5\tilde{h}e^{-kh} - 5\tilde{h}e^{-2kh} + 4he^{-2kh}) \\
        &\quad + \frac{\sigma_v^4}{8k^3}e^{-kh}(2kh^2-3\tilde{h} -2h -3\tilde{h}e^{-kh} + 8he^{-kh} ), \\
        D_{33}
        &\triangleq 3\left[ \left(2\theta^2 + \frac{3\theta\sigma_v^2}{k} + \frac{\sigma_v^4}{k^2} \right)\frac{1+2e^{kh}}{1+e^{-kh}} + \left(\theta^2 + \frac{\theta\sigma_v^2}{k} + \frac{\sigma_v^4}{2k^2} \right)\frac{1-e^{-kh}}{1+e^{-kh}} \right],\\
        D_{32}
        &\triangleq 3\left( 2\theta^2 + \frac{3\theta\sigma_v^2}{k} + \frac{\sigma_v^4}{k^2} \right),\\
        D_{31}
        &\triangleq 3\left[ \theta^2 + \frac{2\theta\sigma_v^2}{k} + \frac{\sigma_v^4}{2k^2} - \frac{\theta\sigma_v^2}{k(1+e^{-kh})} \right].
\end{align*}
\begin{align*}
& F_1 \triangleq \frac{1}{8}\tilde{h}^3e^{-kh}, 
& F_2 & \triangleq \frac{1}{8}\theta \tilde{h}^3 - \frac{1}{2}D_2\tilde{h}, \\
& F_3 \triangleq \frac{1}{2}D_1\theta\tilde{h} - \frac{1}{2}D_3\tilde{h} + \frac{1}{2}\tilde{h}e^{-kh} E[y^2],
& F_4 & \triangleq 3F_1(\theta + \sigma_v^2/k), \\
& F_5 \triangleq F_2 -3F_1 (\theta + \sigma_v^2/k),
& F_6 & \triangleq F_1D_{33}, \\
& F_7 \triangleq F_1 D_{32}  + F_2 (2\theta + \sigma_v^2/k ),
& F_8 & \triangleq F_3  -F_1D_{31} + F_2 (2\theta + \sigma_v^2/k).
\end{align*}
\begin{align*}
         C_{2}
         &\triangleq \frac{\tilde{h} - he^{-kh}}{1-e^{-kh}}\left(\frac{\sigma_v^2}{2k^2} - \frac{\rho\sigma_v}{k}\right) - \frac{\tilde{h}^2}{1-e^{-2kh}} \left(\frac{\sigma_v^2}{4k} + \frac{\theta}{2} \right) + \frac{1}{k} ,\\
         C_3 
         &\triangleq \frac{e^{-kh}h\tilde{h}}{1-e^{-kh}} \left(\frac{\sigma_v^2}{4k} - \frac{\rho\sigma_v}{2}\right) - \frac{e^{-kh}\tilde{h}^2}{1-e^{-2kh}}\left(\frac{\sigma_v^2}{4k} + \frac{\theta}{2} \right),\\
         C_4
        &\triangleq  \frac{e^{-2kh}h\tilde{h}}{1-e^{-kh}}\left( \frac{\sigma_v^2}{4k} - \frac{\rho\sigma_v}{2} \right) - \frac{\tilde{h}^2 }{1-e^{-2kh}}\left(\frac{\sigma_v^2}{4k} + \frac{\theta}{2} \right) ,\\
         C
         &\triangleq \frac{\tilde{h}^2}{4(1-e^{-2kh})}, \\
         %\qquad \qquad\quad
         C_y 
         & \triangleq - \frac{\tilde{h}}{2(1-e^{-kh})},\\
        C_{51}
        &\triangleq \frac{F_6}{1-e^{-3kh}} - \frac{F_7}{1-e^{-2kh}} + \frac{F_8}{1-e^{-kh}},\\
        C_{52}
        &\triangleq \frac{F_4}{1-e^{-3kh}} + \frac{F_5}{1-e^{-2kh}}, \\
        C_{53}
        & \triangleq -\frac{F_1}{1-e^{-3kh}}.
\end{align*}
\endgroup

\section*{Appendix D: Extensions to other affine stochastic volatility models}
Here we discuss three extensions of the baseline affine SV model studied in the main body of this paper.  
\subsection*{Appendix D1: SV model with jumps in the return process}
The first extension is the following SV model, which adds a jump component in the log price process of the Heston model: 
\begin{align*}
    d\log s(t) &= (\mu- v(t)/2) dt + \sqrt{v(t)}dw^s(t) + dz^s(t),\\
    dv(t)      &= k(\theta - v(t))dt + \sigma_v \sqrt{v(t)}dw^v(t),
\end{align*}
where $z^s(t)$ is a compound Poisson process and independent of everything else,  $N(t)$ is the associated Poisson process with rate $\lambda$, and $j$ is the jump which is a random variable distributed according to $F_j(\cdot,\boldsymbol{\theta}_j)$ and $\boldsymbol{\theta}_j$ is a parameter. %with mean $\mu_j$ and variance $\sigma_j^2$. 
For this model,
\begin{equation*}
    y_n = y_{o,n} + J_n,
\end{equation*}
where
\begin{eqnarray*}
    y_{o,n} & \triangleq & \mu h - \frac{1}{2}IV_{n} + \rho I_n + \sqrt{1-\rho^2}I_n^{*}, \\
    J_n & \triangleq & \sum_{i=N((n-1)h)+1}^{N(nh)}j_i.
\end{eqnarray*}
In this model, we have two more parameters $\boldsymbol{\theta}_j$ and $\lambda$ to estimate. Therefore, we need the following seven equations: 
\begin{eqnarray*}
\E [y_n] & = & E [y_{o,n}] + E [J_n], \\
\var(y_n) &= & \var(y_{o,n}) + \var(J_n),\\
\cov(y_n,y_{n+1}) &= & \cov(y_{o,n},y_{o,n+1}),\\
\cov(y_n,y_{n+2}) &= & e^{-kh}\cov(y_n,y_{n+1}),\\
\cov(y_n^2,y_{n+1}) &= & \cov(y_{o,n}^2,y_{o,n+1}) + 2 E [J_n]\cov(y_{o,n},y_{o,n+1}),\\
\cov(y_n,y_{n+1}^2) &=& \cov(y_{o,n},y_{o,n+1}^2) + 2 E [J_{n+1}]\cov(y_{o,n},y_{o,n+1}), \\
\cm_3[y_n] &= & \cm_3[y_{o,n}] + \cm_3[J_n],
\end{eqnarray*}
where $\cm_3[\cdot]$ denotes the third central moment. If $\boldsymbol{\theta}_j$ contains more than one parameter, then more moment equations can be computed.

\subsection*{Appendix D2: SV model with jumps in the variance process}
The second extension adds a jump component in the variance process of the Heston model:
\begin{align*}
d\log s(t) &= (\mu - v(t)/2) dt + \sqrt{v(t)}dw^s(t),\\
dv(t) &= k(\theta - v(t))dt + \sigma_v\sqrt{v(t)}dw^v(t) + dz^v(t),
\end{align*}
where $z^v(t)$ denotes a compound Poisson process with rate $\lambda$ and is independent of all other processes. The jump sizes are determined by a random variable, $j$, with distribution $F_j(\cdot,\boldsymbol{\theta}_j)$, where $\boldsymbol{\theta}_j$ symbolizes the parameter vector. As in the Heston model, initial variance $v(0)$ is assumed to follow the steady state distribution of $v(t)$ to ensure the process is strictly stationary.

Applying It\^{o}'s lemma, we derive:
\begin{equation*}
d(e^{kt}v(t)) = \theta de^{kt} + \sigma_v e^{kt}\sqrt{v(t)}dw^v(t) + e^{kt}dz(t).
\end{equation*}
This leads to:
\begin{align*}
v(t) - \theta = e^{-kt}(v(0) - \theta)  + \sigma_v e^{-kt} \int_0^t e^{ks}\sqrt{v(s)}dw^v(s) + e^{-kt} \int_0^t e^{ks}dz(s).
\end{align*}
Given the stationarity of $v(t)$, we have $v(0)\overset{d}{=} v(t)$. Consequently, the distributional equality holds:
\begin{equation*}
(1-e^{-kt}) (v(t)-\theta) \overset{d}{=} \sigma_v e^{-kt} \int_0^te^{ks}\sqrt{v(s)}dw^v(s) + e^{-kt}\int_0^te^{ks}dz^v(s).
\end{equation*}

Defining the integrated exponential jump terms as $I\!E\!Z_{0,t} \triangleq \int_0^t e^{ks}dz^v(s)$, we can rewrite the equation as:
\begin{equation*}
(1-e^{-kt})(v(t)-\theta) \overset{d}{=} \sigma_v e^{-kt} I\!E_{0,t} + e^{-kt} I\!E\!Z_{0,t}.
\end{equation*}
For an interval $((n-1)h, nh]$, this relationship becomes:
\begin{equation*}
 (1-e^{-knh}) (v_n - \theta) \overset{d}{=} \sigma_v e^{-knh} I\!E_n + e^{-knh} I\!E\!Z_n,
\end{equation*}
where $I\!E\!Z_n \triangleq I\!E\!Z_{(n-1)h, nh}$.

The $m$-th moment of $(v_n-\theta)$ can be calculated using:
\begin{equation}\label{eqn:moment-svj-in-variance}
 \mathbb{E}[(v_n - \theta)^m] 
 = (1-e^{-knh})^{-m} e^{-mknh} \sum_{i=0}^m \binom{m}{i}\sigma_v^i \mathbb{E}[I\!E_n^i] \mathbb{E}[I\!E\!Z_n^{m-i}],~~ m = 1,2,\cdots,
\end{equation}
due to the independence between $I\!E_n$ and $I\!E\!Z_n$. The expectation $\mathbb{E}[I\!E_n^i]$ can be computed as detailed in Appendix A. In what follows we demonstrate how to compute the expectation $\mathbb{E}[I\!E\!Z_n^m]$. 

By definition, we have
\begin{equation*}
I\!E\!Z_n = \sum_{i = N((n-1)h) + 1}^{N(nh)} e^{ks_i}j_i.
\end{equation*}
To illustrate, consider the expected value $\E[I\!E\!Z_1]$ for the interval $(0,h]$. Assume that the compound Poisson process $z(t)$ experiences jumps at times $s_1,s_2,\cdots, s_{N(h)}$, corresponding to jump sizes $j_1,j_2,\cdots, j_{N(h)}$. It is important to note that $s_1,s_2,\cdots, s_{N(h)}$ are not ordered. Given $N(h)$, these times are independent and identically distributed (i.i.d.) uniform random variables over the interval $(0, h]$; this is detailed in Theorem 5.2 by \citet{ross2010introduction}. 

Representing a generic uniform random variable over $(0, h]$ with $s$, one can interpret $I\!E\!Z_{0,t}$ as a variant of the process $z^v(t)$ with identical Poisson dynamics but with modified jump sizes $e^{ks}j$ instead of $j$. 

The $m$-th moment of $\E[I\!E\!Z_1]$ is calculated via:
\begin{equation}\label{eqn:compound-moment}
\E[I\!E\!Z_1^m]
= \sum_{n=1}^{\infty} e^{-\lambda h}\frac{(\lambda h)^n}{n!} \E\left[ \left( \sum_{i=1}^n e^{ks_i}j_i \right)^m \right].
\end{equation}
Computing the first moment is straightforward: $\E[I\!E\!Z_1] = \lambda h \E[e^{ks}]\E[j]$. The calculation of the second and higher moments is feasible provided the moments of $j$ exist, as demonstrated in studies by \citet{wu2019moment}, \citet{wu2022moment} and \citet{yang2020method}. Consequently, the moments of $v_n-\theta$ of any (integer) order are determinable through Equations~\eqref{eqn:moment-svj-in-variance} and \eqref{eqn:compound-moment}.

Employing Equation~\eqref{eqn:moment-svj-in-variance}, the $m$-th moment of $v_n$ can be calculated as follows:
\begin{equation*}
\E[v_n^m] = \E[[(v_n-\theta) + \theta]^m] = \sum_{i=1}^m\binom{m}{i}\E[(v_n-\theta)^i]\theta^{m-i}.
\end{equation*}
Likewise, the $n$-th interval return $y_n$ can be decomposed as:
\begin{equation*}
y_n = \mu h - \frac{1}{2}IV_n + \rho I_n + \sqrt{1-\rho^2}I_n^*.
\end{equation*}
The moments and covariances of the returns are computable using the methodology described in Appendix B. We omit the detailed intermediary steps here due to their complexity.

\subsection*{Appendix D3: Multi-factor SV model}
We consider the SV model in which the volatility consists of two independent component processes,
\begin{align*}
    d\log s(t) &= (\mu- v(t)/2) dt + \sqrt{v(t)}dw(t),\\
    v(t)       &= v_1(t) + v_2(t),\\
    dv_1(t)    &= k_1(\theta_1 - v_1(t))dt + \sigma_{1v} \sqrt{v_1(t)}dw_1(t),\\
    dv_2(t)    &= k_2(\theta_2 - v_2(t))dt + \sigma_{2v} \sqrt{v_2(t)}dw_2(t),
\end{align*}
where $w_1(t)$, $w_2(t)$ are two independent Wiener processes, which are also independent of $w(t)$. 

We define
\begin{align*}
    I_{1,n} &\triangleq \int_{(n-1)h}^{nh}\sqrt{v_1(t)}dw_1(t),
    &I_{2,n} &\triangleq \int_{(n-1)h}^{nh}\sqrt{v_2(t)}dw_2(t),\\
    IV_{1,n}&\triangleq \int_{(n-1)h}^{nh}v_1(t)dt,
    &IV_{2,n}&\triangleq \int_{(n-1)h}^{nh}v_2(t)dt,\\
    IV_{n} &\triangleq IV_{1,n} + IV_{2,n},
    &I_n^* &\triangleq \int_{(n-1)h}^{nh}\sqrt{v(t)}dw(t).
\end{align*}
The return $y_n$ can be expressed as
\begin{equation*}
    y_n = \mu h - \frac{1}{2}(IV_{1,n} + IV_{2,n}) + I_n^*.
\end{equation*}
There are seven parameters to estimate in this model, so we need the following seven equations
\begingroup
\allowdisplaybreaks
\begin{align*}
    \E[y_n] &= \mu h -\frac{1}{2}(\theta_1 + \theta_2)h,\\
    \var(y_n) &= \frac{1}{4}\var(IV_{1,n}) + \frac{1}{4}\var(IV_{2,n}) + \var(I_n^*),\\
    \cov(y_n,y_{n+1}) &= \frac{1}{4}\cov(IV_{1,n},IV_{1,n+1}) + \frac{1}{4}\cov(IV_{2,n},IV_{2,n+1}),\\
    cov(y_n,y_{n+2})
    &= e^{-k_1h}\frac{1}{4}cov(IV_{1,n}, IV_{1,n+1})
      +e^{-k_2h}\frac{1}{4}cov(IV_{2,n}, IV_{2,n+1}),\\
    \cm_3[y_n] &= -\frac{1}{8}\cm_3[IV_{1,n}] -\frac{1}{8}\cm_3[IV_{2,n}],\\
    \cov(y_n^2,y_{n+1})&= -\frac{1}{8}\cov(IV_{1,n}^2,IV_{1,n+1}) - \frac{1}{4}(2-2\mu h + \theta_2h)\cov(IV_{1,n},IV_{1,n+1})\\
    &\quad -\frac{1}{8}\cov(IV_{2,n}^2,IV_{2,n+1}) - \frac{1}{4}(2-2\mu h + \theta_1h)\cov(IV_{2,n},IV_{2,n+1}),\\
    \cov(y_n,y_{n+1}^2)&= -\frac{1}{8}\cov(IV_{1,n},IV_{1,n+1}^2) -\frac{1}{4}(2-2\mu h +\theta_2 h) \cov(IV_{1,n},IV_{1,n+1})\\
    &\quad -\frac{1}{8}\cov(IV_{2,n},IV_{2,n+1}^2) -\frac{1}{4}(2-2\mu h + \theta_1 h) \cov(IV_{2,n},IV_{2,n+1}).\\
\end{align*}
\endgroup
\eqref{eqn:recursive-2items} can be used to compute the intermediate terms $\var(IV_{i,n})$, $\cov(IV_{i,n},IV_{i,n+1})$, $cm_3[IV_{i,n}]$, $\cov(IV_{i,n}^2,IV_{i,n+1})$, $\cov(IV_{i,n},IV_{i,n+1}^2)$
($i=1,2$). The final formulas are lengthy so we omit them here.

\end{document}